\newcommand{\G}{Log_A \mathbf{G}}
\newcommand{\B}{Log_A \mathbf{B}}
\newcommand{\LS}{Log_A \mathbf{S}}
\newcommand{\g}{Log_A \mathbf{G}}
\newtheorem{definition}{Definition}[section]
\newtheorem{example}{Example}[section]
\newtheorem{theorem}{Theorem}[section]
\newtheorem{lemma}{Lemma}[section]
\newtheorem{obs}{Observation}[section]
\newtheorem{proposition}{Proposition}[section]
\newtheorem{cor}{Corollary}[section]
\title{A Unified Algebraic Framework for Non-Monotonicity}
\author{Nourhan Ehab
\institute{German University in Cairo\\ Department of Computer Science and Engineering\\ Cairo, Egypt}
\email{nourhan.ehab@guc.edu.eg}
\and
Haythem O. Ismail
\institute{Cairo University\\ Department of Engineering Mathematics\\
Cairo, Egypt}
\institute{German University in Cairo\\ Department of Computer Science and Engineering\\ Cairo, Egypt}
\email{haythem.ismail@guc.edu.eg}
}
\newcommand{\interp}[1]{[\![ #1 ]\!]}
\newcommand{\gcon}{\mathrel{|\hspace{-0.45em}\simeq}}
\newcommand{\ivDash}{\mathrel{\reflectbox{\rotatebox[origin=c]{90}{$\vDash$}}}}
\setlist[description]{leftmargin=\parindent,labelindent=\parindent}
\begin{document}
\maketitle

\begin{abstract}
Tremendous research effort has been dedicated over the years to thoroughly investigate non-monotonic reasoning. With the abundance of non-monotonic logical formalisms, a unified theory that enables comparing the different approaches is much called for. In this paper, we present an algebraic graded logic we refer to as $\G$ capable of encompassing a wide variety of non-monotonic formalisms. We build on Lin and Shoham's argument systems first developed to formalize non-monotonic commonsense reasoning. We show how to encode argument systems as $\G$ theories, and prove that $\G$ captures the notion of belief spaces in argument systems. Since argument systems capture default logic, autoepistemic logic, the principle of negation as failure, and circumscription, our results show that $\G$ captures the before-mentioned non-monotonic logical formalisms as well. Previous results show that $\G$ subsumes possibilistic logic and any non-monotonic inference relation satisfying Makinson's rationality postulates. In this way, $\G$ provides a powerful unified framework for non-monotonicity.
\end{abstract}

\section{Introduction}
Non-monotonic logics are attempts to model commonsense defeasible reasoning that allows making plausible, albeit possibly fallible, assumptions about the world in the absence of complete knowledge. The term ``non-monotonic" refers to the fact that new evidence can retract previous contradicting assumptions. This contrasts with classical logics where new
evidence never invalidates previous assumptions about the world. Modelling non-monotonicity has been the focus of extensive studies in the knowledge representation and reasoning community for many years giving rise to a vast family of non-monotonic formalisms. The currently existing approaches to representing non-monotonicity can be classified into two orthogonal families: fixed point logics and model preference logics \cite{van2008handbook}. Fixed point logics define a fixed point operator by which possibly multiple sets of consistent beliefs can be constructed. Typical non-monotonic logics taking the fixed point approach are Reiter's default logic \cite{reiter1980logic} and Moore's autoepistemic logic \cite{moore1984possible,marek1991autoepistemic}. Model preference logics, on the other hand, define  non-monotonic logical inference relations with respect to selected preferred models of the world. Typical model preference logics are probabilistic logic \cite{adams1965logic,pearl2014probabilistic}, McCarthy's circumscription \cite{circumscription}, system P proposed by Kraus, Lehmann and Magidor \cite{kraus1990nonmonotonic}, and Pearl's system Z \cite{pearl1990system}. The wide diversity of all of these logics in addition to their non-standard semantics has rendered the task of gaining a good understanding of them quite hard. For this reason, a unified theory that enables comparing the different approaches is much called for. The purpose of this paper is to present an algebraic graded logic we refer to as $\G$ \cite{logag,nourhanThesis2016,logagcommonsense} capable of encompassing the previously-mentioned non-monotonic logics providing a general framework for non-monotonicity. 

Another non-standard attempt at formalizing commonsense non-monotonic reasoning is Lin and Shoham's argument systems \cite{lin1989argument}. Argument systems are considered a radical departure from the classical sentence-based approaches as they are based entirely on inference rules. In \cite{lin1989argument}, Lin and Shoham prove that classical non-monotonic approaches such as default logic \cite{reiter1980logic}, autoepistemic logic \cite{moore1984possible}, circumscription \cite{circumscription}, and the principle of negation as failure \cite{clark1978negation} are all special cases of argument systems. In this paper, we show that argument systems can be embedded in $\G$ proving that $\G$ captures the same non-monotonic logical approaches that argument systems capture. In \cite{logag-possibilistic}, we proved that $\G$ subsumes possibilistic logic \cite{possibilistic} and any non-monotonic inference relation satisfying Makinson's rationality postulates. While other unifying frameworks for non-monotonic formalisms such as \cite{bondarenko1997abstract,chen1993minimal} exist in the literature, non of these frameworks can capture weighted approaches such as possibilistic logic while encompassing the classical previously-mentioned logical approaches like $\G$ does. In this way, $\G$ can be considered a powerful algebraic unified framework for non-monotonicity providing a unified understanding of a vast diversity of non-monotonic logical formalisms.

The rest of this paper is organized as follows. In Section \ref{logag}, $\G$ will be thoroughly reviewed describing its syntax and semantics. Section \ref{argument-systems} will briefly review argument systems. In Section \ref{argument-systems-logag}, the main results of this paper, proving that $\G$ subsumes argument systems, will be presented. Finally, some concluding remarks are outlined in Section \ref{conc}.


\section{$\G$} \label{logag}
$\g$ is a graded logic for reasoning with uncertain beliefs. ``Log" stands for logic, ``A" for algebraic, and ``G" for grades. In $\g$, a classical logical formula could be associated with a grade representing a measure of its uncertainty. Non-graded formulas are taken to be certain. In this way, $\g$ is a logic for reasoning about graded propositions. $\g$ is algebraic in that it is a language of only terms, some of which denote propositions. Both propositions and their grades are taken as individuals in the $\g$ ontology. While some multimodal logics such as \cite{gradedtrust,condprob} may be used to express graded grading propositions too, unlike $\g$, the grades themselves are embedded in the modal operators and are not amenable to reasoning and quantification. This makes $\g$ a quite expressive language that is still intuitive and very similar in syntax to first-order logic. $\g$ is demonstrably useful in commonsense reasoning including default reasoning, reasoning with information provided by a chain of sources of varying degrees of trust, and reasoning with paradoxical sentences as discussed in \cite{nourhanThesis2016,logag}.

While most of the graded logics we are aware of employ non-classical modal logic semantics by assigning grades to possible worlds \cite{dubois2014weighted}, $\g$ is a non-modal logic with classical notions of worlds and truth values. This is not to say that $\g$ is a classical logic, but it is closer in spirit to classical non-monotonic logics such as default logic and circumscription. Following these formalisms, $\g$ assumes a classical notion of logical consequence on top of which a more restrictive, non-classical relation is defined selecting only a subset of the classical models. In defining this relation we take the algebraic, rather than the modal, route. The remaining of this section is dedicated to reviewing the syntax and semantics of $\G$. A sound and complete proof theory for $\G$ is presented in \cite{logag,nourhanThesis2016}. In \cite{logag}, it was proven that $\G$ is a stable and well-behaved logic observing Makinson's properties of reflexivity, cut, and cautious monotony.

\subsection{$\G$ Syntax}
$\G$ consists of algebraically constructed terms from function symbols. There are no sentences in $\G$; instead, we use terms of a distinguished syntactic type to denote propositions. $\G$ is a variant of $\B$ \cite{logab} and $\LS$ \cite{logas}, which are algebraic
languages for reasoning about, respectively, beliefs and temporal phenomena.
Propositions are included as first-class individuals in the $\G$ ontology and are structured in a Boolean algebra giving us all standard truth conditions and classical notions of consequence and validity. 
The inclusion of propositions in the ontology, though non-standard, has been suggested by several authors \cite{church1950carnap,bealer1979theories,parsons1993denoting,shapiro1993belief}. We refer the reader to \cite{logab,shapiro1993belief} for arguments in favour of adopting this approach in the representation of propositional attitudes in artificial intelligence.  Additionally, \emph{grades} are introduced as first-class individuals in the ontology. As a result, propositions \emph{about} graded propositions can be constructed, which are themselves recursively gradable. 
                
A $\G$ language is a many-sorted language composed of a set of terms partitioned into three base sorts: $\sigma_P$ is a set of terms (including the term $\mathbf{true}$)  denoting propositions, $\sigma_D$ is a set of terms denoting grades, and $\sigma_I$ is a set of terms denoting anything else.
A \emph{$\G$ alphabet} includes a non-empty, countable set of constant and function symbols each having a syntactic sort from the set $\sigma= \{\sigma_P,
\sigma_D, \sigma_I\} \cup \{\tau_1 \longrightarrow \tau_2 ~|~ \tau_1
\in \{\sigma_P,\sigma_D, \sigma_I\}\}$ and $\tau_2 \in \sigma\}$ of syntactic sorts. Intuitively, $\tau_1 \longrightarrow \tau_2$ is
the syntactic sort of function symbols that take a single argument
of sort $\sigma_P$, $\sigma_D$, or $\sigma_I$ and produce a functional term of sort $\tau_2$. Given the restriction of the first argument of function symbols to base sorts, $\G$ is, in a sense, a first-order language. In addition, an alphabet includes a countably infinite set of variables of the three base sorts; a set of syncategorematic symbols including the comma, various matching pairs of brackets and parentheses, and the symbol $\forall$; and a set of logical symbols defined as the union of the following sets: $\{\neg\} \subseteq \sigma_P \longrightarrow \sigma_P$, $\{\wedge, \vee\} \subseteq \sigma_P \longrightarrow \sigma_P \longrightarrow\sigma_P$, $\{\prec, \doteq\} \subseteq \sigma_D \longrightarrow \sigma_D \longrightarrow \sigma_P$, and $\{\mathbf{G}\} \subseteq \sigma_P \longrightarrow \sigma_D \longrightarrow \sigma_P$. 
Terms involving $\supset$ \footnote{Through out this paper, we will use $\supset$ to denote material implication.} and $\exists$ can always be expressed in terms of the above logical operators and $\forall$. The terms containing $\mathbf{G}$ are referred to \emph{grading terms}, while the terms not including $\mathbf{G}$ are referred to as \emph{non-grading terms}.  

The following are some examples of well-formed $\sigma_P$ terms permissible by the syntax of $\g$.
\begin{enumerate}
\item $\forall d_1, d_2 [d_1 \doteq d_2 \supset d_2 \doteq d_1]$
    \item $\forall d_1, d_2 [\neg (d_1 \prec d_2) \Leftrightarrow (d_2 \prec d_1 \vee d_2 \doteq d_1)]$
  \item $\mathbf{G}(P,2)$
    \item $\forall x [P(x) \supset \mathbf{G}(Q(x),5)]$
    \item $\mathbf{G}(\forall x [P(x) \supset \neg Q(x)],10)$
     \item $\mathbf{G}(\mathbf{G}(\mathbf{G}(R, 2), 3),12)$
\end{enumerate}

The first two well-formed terms denote properties of grades: (1) denotes the proposition that the equality relation of grades is symmetric, and (2) denotes the proposition that the ordering of grades is linear.  (3) denotes the proposition that the grade of $P$ is 2. (4) and (5) illustrate the \emph{de re} and \emph{de dicto} grading, respectively. The syntax of $\g$ allows the nesting of grading terms forming grading chains as shown in (6). One possible use of such nesting is to express information from various knowledge sources with different trust degrees \cite{logag}.   
\newpage
\subsection{From Syntax to Semantics}
A key element in the semantics of $\G$ is the notion of a \emph{$\G$} structure.

\begin{definition}
A \emph{$\G$ structure} is a quintuple $\mathfrak{S}=\langle
\mathcal{D}, \mathfrak{A}, \mathfrak{g}, <, \mathfrak{e}\rangle$,
where
\begin{itemize}
    \itemsep0pt
    \item $\mathcal{D}$, the domain of discourse, is a set with two disjoint,
    non-empty, countable subsets: a set of propositions $\mathcal{P}$, and a set of grades $\mathcal{G}$. 
    \item $\mathfrak{A} = \langle \mathcal{P}, +, \cdot, -, \bot, \top\rangle$ is a complete, non-degenerate Boolean algebra.
    \item $\mathfrak{g} : \mathcal{P} \times \mathcal{G} \longrightarrow
    \mathcal{P}$ is a grading function.
    \item $< : \mathcal{G} \times \mathcal{G} \longrightarrow
    \mathcal{P}$ is an ordering function.
    \item $\mathfrak{e} : \mathcal{G} \times \mathcal{G} \longrightarrow
    \{\bot,\top\}$ is an equality function, where for every $g_1, g_2 \in \mathcal{G}$:\\ $\mathfrak{e}(g_1,g_2) = \top$ if $g_1=g_2$, and $\mathfrak{e}(g_1,g_2) = \bot$ otherwise.
\end{itemize}
\end{definition}

\vspace*{0px}

A \emph{valuation} $\mathcal{V}$ of a $\G$ language is
a triple $\langle \mathfrak{S}, \mathcal{V}_f,
\mathcal{V}_x\rangle$, where $\mathfrak{S}$ is a $\G$ structure, $\mathcal{V}_f$ is a function that assigns to each function symbol an appropriate function on $\mathcal{D}$, and $\mathcal{V}_x$ is a function mapping each variable to a corresponding element of the appropriate block of $\mathcal{D}$.
An \emph{interpretation} of $\G$ terms is given by a function $\interp{\cdot}^{\mathcal{V}}$.  
Figure \ref{fig:interp} summarizes the operation of $\interp{\cdot}^{\mathcal{V}}$.
\begin{figure}[h]
\centering
    \includegraphics[width=0.45\linewidth]{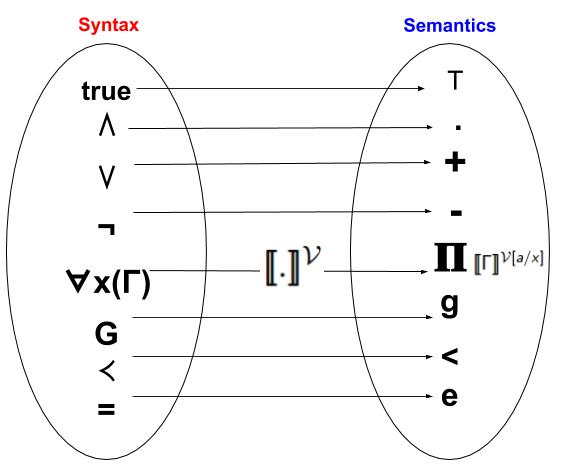}
    \caption{\footnotesize{The interpretation of the $\g$ terms.}}
    \label{fig:interp}
\end{figure}

\begin{definition}
Let $L$ be a $\G$ language and let $\mathcal{V}$ be a
valuation of $L$. An \emph{interpretation} of the terms of
$L$ is given by a function $\interp{\cdot}^{\mathcal{V}}$:
\begin{itemize}
    \itemsep0pt
     \item $\interp{true}^{\mathcal{V}} = \top$
    \item $\interp{x}^{\mathcal{V}} = \mathcal{V}_x(x)$, for a variable $x$
    \item $\interp{c}^{\mathcal{V}} = \mathcal{V}_f(c)$, for a constant $c$
    \item $\interp{f(t_1, \ldots, t_n)}^{\mathcal{V}} = \mathcal{V}_f(f)(\interp{t_1}^{\mathcal{V}}, \ldots, \interp{t_n}^{\mathcal{V}})$, for an $n$-adic ($n \geq 1$) function symbol $f$
    \item $\interp{(t_1 \wedge t_2)}^{\mathcal{V}} = \interp{t_1}^{\mathcal{V}} \cdot \interp{t_2}^{\mathcal{V}}$
    \item $\interp{(t_1 \vee t_2)}^{\mathcal{V}} = \interp{t_1}^{\mathcal{V}} + \interp{t_2}^{\mathcal{V}}$
    \item $\interp{\neg t}^{\mathcal{V}} = -\interp{t}^{\mathcal{V}}$
    \item $\interp{\forall x(t)}^{\mathcal{V}} = \displaystyle \prod_{a \in \mathcal{D}} \interp{t}^{\mathcal{V}[a/x]}$
    \item $\interp{\mathbf{G}(t_1,t_2)}^{\mathcal{V}} = \mathfrak{g}(\interp{t_1}^{\mathcal{V}},\interp{t_2}^{\mathcal{V}})$
    \item $\interp{t_1 \prec t_2}^{\mathcal{V}} = \interp{t_1}^{\mathcal{V}} < \interp{t_2}^{\mathcal{V}}$
    \item $\interp{t_1 \doteq t_2}^{\mathcal{V}} = \mathfrak{e}(\interp{t_1}^{\mathcal{V}},\interp{t_2}^{\mathcal{V}})$
\end{itemize}
\end{definition}
 
\subsection{Beyond Classical Logical Consequence}
We define logical consequence using the familiar notion of filters from Boolean algebra \cite{algebra}.
\begin{definition}\label{filter}
A filter of a boolean algebra $\mathfrak{A}=\langle \mathcal{P}, +, \cdot, -, \bot, \top\rangle$ is a subset $F$ of $\mathcal{P}$ such that:
\begin{enumerate}
    \item  $\top \in F$;
    \item If $a, b \in F $, then $a \cdot b \in F$; and
    \item  If $a \in F$ and $a \le b$, then $b \in F$.
\end{enumerate}
\end{definition}

A propositional term $\phi$ is a logical consequence of a set of propositional terms $\Gamma$ if it is a member of the filter of the interpretation of $\Gamma$, denoted $F(\interp{\Gamma}^\mathcal{V})$. 
\begin{definition}\label{def:con2}
Let $L$ be a $\G$ language.  For every $\phi \in \sigma_P$
and ${\rm \Gamma} \subseteq \sigma_P$, $\phi$ is a logical
consequence of ${\rm \Gamma}$, denoted ${\rm \Gamma} \models
\phi$, if, for every $L$-valuation $\mathcal{V}$,
$\interp{\phi}^\mathcal{V} \in F(\interp{\Gamma}^\mathcal{V})$ where $\interp{\Gamma}^\mathcal{V}=\displaystyle\prod_{\gamma \in {\rm \Gamma}}
\interp{\gamma}^{\mathcal{V}}.$
\end{definition}

Unfortunately, the definition of logical consequence presented in the previous definition cannot address uncertain reasoning with graded propositions. To see that, consider the following situation. You see a bird from far away that looks a lot like a penguin. You know that any penguin has wings but does not fly. To make sure that what you see is indeed a penguin, you ask your brother who tells you that  
this bird must not be a penguin since your sister told him that she saw the same bird flying.
This situation can be represented in $\G$  by a set of propositions $\mathcal{Q}$ as shown in Figure \ref{fig:gradedFilters} where $p$ denotes that the bird is a penguin, $w$ denotes has wings, and $f$ denotes that the bird flies. For the ease of readability of Figure \ref{fig:gradedFilters}, we write $\neg \phi$ instead of $-\phi$ and $\phi \supset \psi$ instead of $-\phi+\psi$. Since you are uncertain about whether the bird you see is a penguin, this is represented as a graded proposition $\mathfrak{g}(p,d1)$ where $d1$ is your uncertainty degree in what you see. What your brother tells you is represented by the grading chain $\mathfrak{g}(\mathfrak{g}(f,d2),d3)$ where $d3$ represents how much you trust your brother, and $d2$ represents how much you trust your sister. Now, consider an agent reasoning with the set $\mathcal{Q}$. 
Initially, it would make sense for the agent to be able to conclude $p$ even if $p$ is uncertain (and, hence, graded) since it has no reason to believe $\neg p$. The filter $F(\mathcal{Q})$, however, contains the classical logical consequences of $\mathcal{Q}$, but will never contain the graded proposition $p$.
For this reason, we extend our classical notion of filters into a more liberal notion of \emph{graded filters} to enable the agent to conclude, in addition to the classical logical consequences of $\mathcal{Q}$,  propositions that are graded in $\mathcal{Q}$ (like $p$) or follow from graded propositions in $\mathcal{Q}$ (like $\neg f$ and $w$). This should be done without introducing inconsistencies. Due to nested grading, graded filters come in degrees depending on the depth of nesting of the admitted graded propositions. In Figure \ref{fig:gradedFilters}, $\mathcal{F}^1(\mathcal{Q})$ is the graded filter of degree 1. $\mathcal{F}^1(\mathcal{Q})$ contains everything in $F(\mathcal{Q})$ in addition to the nested graded propositions at depth 1, $p$ and $\mathfrak{g}(f, d2)$. $\neg f$ and $w$ are also admitted to $\mathcal{F}^1(\mathcal{Q})$ since they follow classically from $\{p,p\supset \neg f\}$ and $\{p,p\supset w\}$ respectively. Consequently, at degree 1, we end up believing that the bird is a penguin that has wings and does not fly. To compute the graded filter of degree 2, $\mathcal{F}^2(\mathcal{Q})$, we take everything in $\mathcal{F}^1(\mathcal{Q})$ and try to add the graded proposition $f$ at depth 2. The problem is, once we do that, we have a contradiction with $\neg f$ (we now believe that bird flies and does not fly at the same time). To resolve the contradiction, we admit to $\mathcal{F}^2(\mathcal{Q})$ either $p$ (and consequently $\neg f$ and $w$) or $f$. In deciding which of $p$ or $f$ to kick out we will allude to their grades. The grade of $p$ is $d1$, and $f$ is graded in a grading chain containing $d2$ and $d3$. To get a fused grade for $f$, we will combine both $d2$ and $d3$ using an appropriate fusion operator. If $d1$ is less than the fused grade of $f$, $p$ will not be admitted to the graded filter, together with it consequence $\neg f$. Otherwise, $f$ will not be admitted, and $p$ and $\neg f$ will remain. If we try to compute $\mathcal{F}^3(\mathcal{Q})$, we get everything in $\mathcal{F}^2(\mathcal{Q})$ reaching a fixed point. 

\begin{figure} [h]
\centering
    \includegraphics[width=0.5\linewidth]{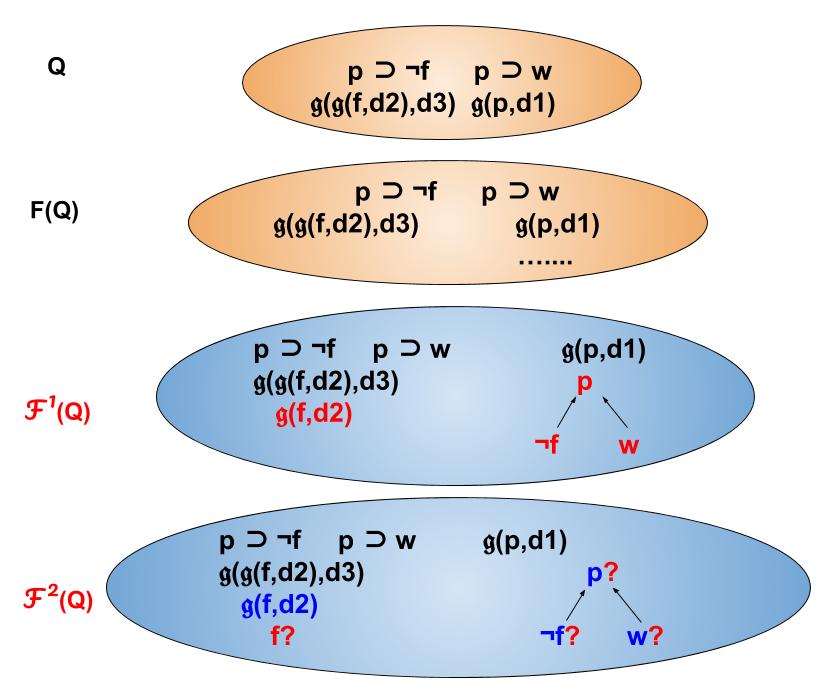}
    \caption{\footnotesize{Graded Filters}}
    \label{fig:gradedFilters}
\end{figure}

In general, the elements of $\mathcal{F}^i(\mathcal{Q})$ will be referred to as the graded consequences at level $i$. The rest of this section is dedicated to formally defining graded filters together with our graded consequence relation based on graded filters. In the sequel, for every $p \in \mathcal{P}$ and $g \in
\mathcal{G}$, $\mathfrak{g}(p, g)$ will be taken to represent a \emph{grading proposition} that \emph{grades} $p$.
Moreover, if $\mathfrak{g}(p, g) \in \mathcal{Q} \subseteq
\mathcal{P}$, then $p$ is \emph{graded} in $\mathcal{Q}$.
The set of \emph{$p$ graders in $\mathcal{Q}$} is defined to be the
set $Graders(p, \mathcal{Q}) = \{q | q \in \mathcal{Q}$ and $q$ grades
$p\}$. Throughout, a $\G$ structure
$\mathfrak{S}=\langle \mathcal{D}, \mathfrak{A}, \mathfrak{g}, <,
\mathfrak{e}\rangle$ is assumed.

As a building step towards formalizing the notion of a graded filter, the structure of graded propositions should be carefully specified. For this reason, the following notion of an \emph{embedded proposition} is defined. 
\begin{definition}\label{def:embed}
Let $\mathcal{Q} \subseteq \mathcal{P}$. A proposition $p \in
\mathcal{P}$ is embedded in $\mathcal{Q}$ if (i) $p \in
\mathcal{Q}$ (ii) or if, for some $g \in \mathcal{G}$,
$\mathfrak{g}(p, g)$ is embedded in $\mathcal{Q}$. Henceforth, let
$E(\mathcal{Q}) = \{p | p $ is embedded in $\mathcal{Q}\}$.
\end{definition}

Since a graded proposition $p$ might be embedded at any depth $n \in \mathbb{N}$, the degree of embedding of a graded proposition $p$ is defined as follows.

\begin{definition}\label{def:degembed}
For $\mathcal{Q} \subseteq \mathcal{P}$, let the degree of embedding of $p$ in $\mathcal{Q}$ be a function  $\delta_{\mathcal{Q}}
: E(\mathcal{Q}) \longrightarrow \mathbb{N}$, where
\begin{enumerate}\itemsep0pt
    \itemsep0pt
    \item if $p \in \mathcal{Q}$, then $\delta_{\mathcal{Q}}(p) = 0$; and
    \item if $p \notin \mathcal{Q}$, then $\delta_{\mathcal{Q}}(p) =
    e+1$, where $e = \min_{q \in Graders(p, E(\mathcal{Q}))} \{\delta_{\mathcal{Q}}(q)\}$.
\end{enumerate}
\end{definition}
For notational convenience, we let the set of embedded propositions at depth $n$ be $E^n(\mathcal{Q}) = \{p \in
E(\mathcal{Q})~|~\delta_\mathcal{Q}(p) \leq n\}$, for every $n \in
\mathbb{N}$.


The key to defining graded filters is the intuition that the set of consequences of a proposition set $\mathcal{Q}$ may be further enriched by \emph{telescoping} $\mathcal{Q}$ and accepting some of the propositions graded therein. We refer to this process as telescoping as the set of graded filters at increasing depths looks like an inverted telescope (as illustrated in Figure \ref{fig:gradedFilters}). For this, we need to define (i) the process of telescoping, which is a step-wise process that considers propositions at increasing grading depths, and (ii) a criterion for accepting graded propositions which, as mentioned before, depends on the grades of said propositions. Since the nesting of grading chains is permissible in $\G$, it is necessary to compute the \emph{fused grade} of a graded proposition $p$ in a chain $C$ to decide whether it will be accepted or not. The fusion of grades in a chain is done according to an operator $\otimes$. Further, since a graded proposition $p$ might be graded by more than one grading chain, we define the notion of the fused grade of $p$ across all the chains grading it by an operator $\oplus$.

\begin{definition}\label{def:telstruct}
Let $\mathfrak{S}$ be a $\G$ structure with a depth- and
fan-out-bounded $\mathcal{P}$ \footnote{$\mathcal{P}$ is depth-bounded if every grading chain has at most $d$ distinct grading propositions and is fan-out-bounded if every grading proposition grades at most $f_{out}$ propositions where $d, f_{out} \in \mathbb{N}$ \cite{nourhanThesis2016}.}. A telescoping structure for
$\mathfrak{S}$ is a quadruple $\mathfrak{T}=\langle \mathcal{T},
\mathfrak{O}, \otimes, \oplus\rangle$, where
\begin{itemize}
\itemsep0pt
\item $\mathcal{T} \subseteq \mathcal{P}$, referred to as the top theory; %
\item $\mathfrak{O}$ is an ultrafilter of the subalgebra induced
by $Range(<)$ (an ultrafilter is a maximal filter with respect to not including $\bot$) \cite{algebra}; 
\item $\otimes : \bigcup_{i=1}^\infty \mathcal{G}^i
\longrightarrow \mathcal{G}$; and $\oplus: \bigcup_{i=1}^\infty \mathcal{G}^i \longrightarrow
\mathcal{G}$. 
\end{itemize}
\end{definition}

Recasting the familiar notion of a \emph{kernel} of a belief base \cite{hansson} into the context of $\G$ structures, we say that a
$\bot$-kernel of $\mathcal{Q} \subseteq \mathcal{P}$ is a subset-minimal inconsistent set
$\mathcal{X} \subseteq \mathcal{Q}$ such that $F(E(F(\mathcal{X})))$ is
improper ($=\mathcal{P}$) where $E(F(\mathcal{X}))$ is the set of all embedded graded propositions in the filter of $\mathcal{X}$. Let $\mathcal{Q}^{\ivDash} \bot$ be the set of $\mathcal{Q}$ kernels that entail $\bot$. A proposition $p \in \mathcal{X}$ \emph{survives} $\mathcal{X}$ in $\mathfrak{T}$ if $p$ is not the weakest proposition (with the least grade) in $\mathcal{X}$. In what follows, the fused grade of a proposition $p$ in $\mathcal{Q}\subseteq \mathcal{P}$ according to a telescoping structure $\mathfrak{T}$ will be referred to as $\mathfrak{f}_\mathfrak{T}(p,\mathcal{Q})$.

\begin{definition}\label{def:surviving}
For a telescoping structure $\mathfrak{T}=\langle \mathcal{T},
\mathfrak{O}, \otimes, \oplus\rangle$ and a fan-in-bounded \footnote{$\mathcal{Q}$ is fan-in-bounded if every graded proposition is graded by at most $f_{in}$ grading propositions where $f_{in} \in \mathbb{N}$ \cite{nourhanThesis2016}.}
$\mathcal{Q}\subseteq \mathcal{P}$, if $\mathcal{X} \subseteq
\mathcal{Q}$, then $p \in \mathcal{X}$ \emph{survives}
$\mathcal{X}$ \emph{given} $\mathfrak{T}$ if
\begin{enumerate}\itemsep0pt
\itemsep0pt
    \item $p$ is ungraded in $\mathcal{Q}$; or
    \item there is some ungraded $q \in \mathcal{X}$ such that $q \notin F(\mathcal{T})$; or
    \item there is some graded $q \in \mathcal{X}$ such that $q \notin F(\mathcal{T})$ and $(\mathfrak{f}_\mathfrak{T}(q,\mathcal{Q}) < \mathfrak{f}_\mathfrak{T}(p,\mathcal{Q})) \in \mathfrak{O}$.
\end{enumerate}
The set of kernel survivors of $\mathcal{Q}$ given $\mathfrak{T}$ is
the set
\begin{center}
$\kappa(\mathcal{Q}, \mathfrak{T}) = \{p \in \mathcal{Q}~|~$ if  $p
\in \mathcal{X} \in \mathcal{Q}^{\ivDash} \bot$ then $p$ survives
$\mathcal{X}$ given $\mathfrak{T}\}$.
\end{center}
\end{definition}

The notion of a proposition $p$ being \emph{supported} in $\mathcal{Q}$ is defined as follows.

\begin{definition}\label{def:support}
Let $\mathcal{Q}, \mathcal{T} \subseteq \mathcal{P}$. We say that
$p$ is supported in $\mathcal{Q}$ given $\mathcal{T}$ if
\begin{enumerate}\itemsep0pt
\itemsep0pt
    \item $p \in F(\mathcal{T})$; or
    \item there is a grading chain $\langle q_0, q_1, \ldots,
    q_{n}\rangle$ of $p$ in $\mathcal{Q}$ with $q_0 \in F(\mathcal{R})$ where every member of $\mathcal{R}$ is supported
    in $\mathcal{Q}$.
\end{enumerate}
The set of propositions supported in $\mathcal{Q}$ given
$\mathcal{T}$ is denoted by $\varsigma(\mathcal{Q}, \mathcal{T})$.
\end{definition}

\begin{obs} \label{supported-filter}
$\varsigma(\mathcal{Q}, \mathcal{T}) = F(\mathcal{T}) \cup EG$, for some set $EG$ of embedded graded propositions in $\mathcal{Q}$.
\end{obs}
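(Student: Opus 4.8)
The plan is to reduce the stated equality to a single containment about embeddedness. First I would record that $F(\mathcal{T}) \subseteq \varsigma(\mathcal{Q}, \mathcal{T})$, which is immediate: by clause~1 of Definition~\ref{def:support}, every $p \in F(\mathcal{T})$ is supported in $\mathcal{Q}$ given $\mathcal{T}$. Setting $EG := \varsigma(\mathcal{Q}, \mathcal{T}) \setminus F(\mathcal{T})$ then makes the set identity $\varsigma(\mathcal{Q}, \mathcal{T}) = F(\mathcal{T}) \cup EG$ hold trivially. Thus the whole content of the observation collapses to one claim: every member of $EG$ is an embedded graded proposition in $\mathcal{Q}$, i.e.\ $EG \subseteq E(\mathcal{Q})$ with each element genuinely graded.

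The core step is to analyse an arbitrary $p \in EG$. Since $p \notin F(\mathcal{T})$, clause~1 of Definition~\ref{def:support} cannot be the reason $p$ is supported, so $p$ must be supported through clause~2: there is a grading chain $\langle q_0, q_1, \ldots, q_n \rangle$ of $p$ in $\mathcal{Q}$ with $q_0 \in F(\mathcal{R})$, where every member of $\mathcal{R}$ is itself supported in $\mathcal{Q}$. I would unwind this chain, writing $q_n = p$ and $q_i = \mathfrak{g}(q_{i+1}, g_i)$ for suitable grades $g_i$, so that $\mathfrak{g}(p, g_{n-1}) = q_{n-1}$ is the innermost grading proposition and the chain exhibits $p$ as nested within grading propositions tracing back into $\mathcal{Q}$. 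Reading this against Definition~\ref{def:embed}, a short induction on the chain length shows that each $q_{i+1}$ is embedded in $\mathcal{Q}$ because $\mathfrak{g}(q_{i+1}, g_i) = q_i$ is, bottoming out at $p = q_n \in E(\mathcal{Q})$; and because a genuine chain ($n \geq 1$) witnesses the embedding, $p$ is graded, hence an embedded graded proposition. This yields $EG \subseteq E(\mathcal{Q})$ and completes the argument.

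The delicate point, and the step I expect to require the most care, is reconciling the two senses in which a proposition sits \emph{inside} $\mathcal{Q}$. Definition~\ref{def:embed} propagates embeddedness through literal membership of grading propositions, whereas clause~2 of Definition~\ref{def:support} asks only that the outermost link $q_0$ lie in $F(\mathcal{R})$, a logical-consequence closure, rather than in $\mathcal{Q}$ itself. I would therefore fix the intended reading of ``grading chain of $p$ in $\mathcal{Q}$'': the grading propositions $q_0, \ldots, q_{n-1}$ forming the chain are required to be embedded in $\mathcal{Q}$, so that the chain genuinely traces an embedding, while the condition $q_0 \in F(\mathcal{R})$ is an orthogonal support requirement on the content of the outermost proposition. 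Under this reading the embedding of $p$ in $\mathcal{Q}$ follows directly, and the remaining work is the routine induction matching the chain of clause~2 with the embedding witness of Definition~\ref{def:embed}---a matter of aligning definitions rather than any real computation.
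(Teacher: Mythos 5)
Your proof is correct, and it coincides with the paper's treatment: the paper states Observation \ref{supported-filter} with no proof at all, as an immediate unfolding of Definition \ref{def:support}, which is exactly what you carry out --- clause 1 accounts for $F(\mathcal{T})$, and clause 2 forces any supported proposition outside $F(\mathcal{T})$ to sit at the bottom of a grading chain in $\mathcal{Q}$, hence to be an embedded graded proposition in the sense of Definition \ref{def:embed}. Your explicit handling of the underspecified phrase ``grading chain of $p$ in $\mathcal{Q}$'' (requiring the chain to trace an embedding into $\mathcal{Q}$, with $q_0 \in F(\mathcal{R})$ as a separate support condition) is the reading the paper's informal usage supports and is what makes the observation true, so this is a fair reconstruction rather than a gap.
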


The $\mathfrak{T}$-induced telescoping of $\mathcal{Q}$ is defined as the set of propositions supported given $\mathcal{T}$ in the set of kernel survivors of $E^1(F(\mathcal{Q}))$.

\begin{definition}\label{def:telescope}
Let $\mathfrak{T}$ be a telescoping structure for $\mathfrak{S}$.
If $\mathcal{Q} \subset \mathcal{P}$ such that
$E^1(F(\mathcal{Q}))$ is fan-in-bounded, then the
$\mathfrak{T}$-induced telescoping of $\mathcal{Q}$ is given by
\begin{center}
$
\tau_\mathfrak{T}(\mathcal{Q}) =
\varsigma(\kappa(E^1(F(\mathcal{Q})), \mathfrak{T}), \mathcal{T}).
$
\end{center}
\end{definition}

\begin{obs}\label{obs:consistency}
If $F(\mathcal{T})$ is proper, then $F(\varsigma( \kappa(\mathcal{Q},
\mathfrak{T}), \mathcal{T}))$ is proper.
\end{obs}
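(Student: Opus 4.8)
The plan is to convert properness of the filter into consistency of its generating set, and then to show that the operations $\kappa$ and $\varsigma$ together strip from every $\bot$-kernel of $\mathcal{Q}$ at least one proposition essential to its inconsistency. Recall that for any $S \subseteq \mathcal{P}$ the generated filter $F(S)$ is improper exactly when $\bot \in F(S)$, i.e. when some finite subset of $S$ has meet $\bot$; equivalently $F(S)$ is proper iff $S$ is consistent. Hence it suffices to prove that $\varsigma(\kappa(\mathcal{Q}, \mathfrak{T}), \mathcal{T})$ is consistent.

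First I would apply Observation~\ref{supported-filter} to write $\varsigma(\kappa(\mathcal{Q}, \mathfrak{T}), \mathcal{T}) = F(\mathcal{T}) \cup EG$, where $EG$ is a set of supported embedded graded propositions drawn from $\kappa(\mathcal{Q}, \mathfrak{T})$. Assume toward a contradiction that this union is inconsistent, so that there are finitely many $t_1, \ldots, t_m \in F(\mathcal{T})$ and $e_1, \ldots, e_k \in EG$ with $t_1 \cdot \ldots \cdot t_m \cdot e_1 \cdot \ldots \cdot e_k = \bot$. Since $F(\mathcal{T})$ is proper and closed under meets, the certain part $t_1 \cdot \ldots \cdot t_m$ is not $\bot$, so $k \geq 1$ and the embedded propositions $e_1, \ldots, e_k$ are genuinely implicated in the clash.

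Next I would trace each $e_i$ back through the support recursion of Definition~\ref{def:support} and the embedding relation of Definition~\ref{def:embed}: being supported, $e_i$ is unwrapped from a grading chain lying inside $\kappa(\mathcal{Q}, \mathfrak{T})$. Gathering the grading propositions responsible for $e_1, \ldots, e_k$ (together with whatever certain premises the clash consumes) yields a set $\mathcal{X} \subseteq \mathcal{Q}$ for which $F(E(F(\mathcal{X})))$ is improper; by subset-minimality $\mathcal{X}$ contains a kernel $\mathcal{X}_0 \in \mathcal{Q}^{\ivDash}\bot$ with $\mathcal{X}_0 \subseteq \kappa(\mathcal{Q}, \mathfrak{T})$. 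I would then split on Definition~\ref{def:surviving}. If $\mathcal{X}_0$ has no ungraded member outside $F(\mathcal{T})$, then its grade-minimal uncertain member satisfies none of clauses (1)--(3), so it fails to survive and is excluded from $\kappa(\mathcal{Q}, \mathfrak{T})$, contradicting $\mathcal{X}_0 \subseteq \kappa(\mathcal{Q}, \mathfrak{T})$. If instead $\mathcal{X}_0$ contains an ungraded $q \notin F(\mathcal{T})$, then $q$ lies neither in $F(\mathcal{T})$ nor at the end of any supported grading chain, so $q \notin \varsigma(\kappa(\mathcal{Q}, \mathfrak{T}), \mathcal{T})$; as $\mathcal{X}_0$ is minimal, omitting $q$ restores consistency, so the clash could not have lived inside $F(\mathcal{T}) \cup EG$ after all. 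Either branch yields a contradiction, so the set is consistent and its filter is proper.

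The hard part will be the traceback and the verification that the case split is exhaustive and faithful to the definitions. The delicacy is that $\bot$-kernels are defined through the \emph{double} closure $F(E(F(\cdot)))$ at the embedded level, whereas $EG$ arises from a potentially deep, recursively supported unwrapping of nested grading chains intertwined with $F(\mathcal{T})$; one must confirm that the grading propositions recovered from $e_1, \ldots, e_k$ really form an inconsistent-after-embedding set and that minimality isolates a kernel sitting wholly within $\kappa(\mathcal{Q}, \mathfrak{T})$, correctly absorbing the certain premises even though the kernel notion itself is stated without reference to $\mathcal{T}$. I expect this to require an induction on the degree of embedding $\delta_{\mathcal{Q}}$, with the depth-, fan-in-, and fan-out-boundedness of $\mathcal{P}$ and $\mathcal{Q}$ keeping every relevant chain finite, and with the totality of the grade order furnished by the ultrafilter $\mathfrak{O}$ guaranteeing that the grade-minimal element singled out in the first branch is genuinely well defined.
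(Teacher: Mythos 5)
Your proposal is correct and takes essentially the same route as the paper's own proof: both argue by contraposition/contradiction that any $\bot$-kernel surviving into $\varsigma(\kappa(\mathcal{Q},\mathfrak{T}),\mathcal{T})$ would, by Definitions \ref{def:surviving} and \ref{def:support}, have to lie entirely within $\mathcal{T}$ (equivalently, within $F(\mathcal{T})$), contradicting the properness of $F(\mathcal{T})$. Your two-branch case analysis --- the grade-minimal graded member failing all three survival clauses when no ungraded member lies outside $F(\mathcal{T})$, versus an ungraded member outside $F(\mathcal{T})$ failing to be supported --- is precisely an explicit unpacking of the paper's one-sentence appeal to those two definitions.
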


\begin{proof}
If $F(\varsigma( \kappa(\mathcal{Q},
\mathfrak{T}), \mathcal{T}))$ is not proper, then
$\varsigma( \kappa(\mathcal{Q},
\mathfrak{T}), \mathcal{T})$ has at least one kernel
$\mathcal{X} \in \mathcal{Q}^{\ivDash} \bot$. According to Definitions \ref{def:surviving} and \ref{def:support}, this can only happen if $\mathcal{X} \subseteq \mathcal{T}$. Thus, $F(\mathcal{T})$ is
proper.  
\end{proof}
\begin{definition}\label{def:gfilter}
If $F(\mathcal{Q})$ has finitely-many grading propositions, then
$\tau_\mathfrak{T}(\mathcal{Q})$ is defined, for every telescoping structure $\mathfrak{T}$. 
Hence, provided that the right-hand side is defined, let
$$
\tau_\mathfrak{T}^n(\mathcal{Q}) = \left\{
\begin{array}{ll}
  \mathcal{Q} & \mathrm{if~} n = 0 \\
  \tau_\mathfrak{T}(\tau_\mathfrak{T}^{n-1}(\mathcal{Q})) & \mathrm{otherwise}\\
\end{array}
\right.
$$

A \emph{graded filter} of a top theory $\mathcal{T}$, denoted
$\mathcal{F}^n(\mathfrak{T})$, is defined as the filter of the  $\mathfrak{T}$-induced telescoping of $\mathcal{T}$ of degree $n$. 
\end{definition}

In the following example, we now go back to the example we introduced at the beginning of this section in Figure \ref{fig:gradedFilters}. We show how the formal construction of the graded filters matches the intuitions we pointed out earlier. 

\begin{example}\label{ex:logag}
\normalfont{
Consider $\mathcal{Q}=\{ -p+-f,~-p+w,~\mathfrak{g}(p,2), \mathfrak{g}(\mathfrak{g}(f,2), 3)\}$ and $\mathfrak{T}=\langle \mathcal{Q},  \mathfrak{O}, \otimes, \oplus\rangle$ where $\oplus=max$, and $\otimes=mean$. In what follows, let $\tau^n_\mathfrak{T}$ be an abbreviation for $\tau^n_\mathfrak{T}(\mathcal{Q})$. 
\begin{itemize}[leftmargin=*]
\itemsep0pt
\item $\tau^0_\mathfrak{T}~=~\mathcal{Q}$ 
\item  $\tau^1_\mathfrak{T}~$ = $\tau_\mathfrak{T}(\tau^0_\mathfrak{T}) = \varsigma(\kappa(E^1(F(\mathcal{Q})), \mathfrak{T}), \mathcal{T})$ 
\vspace*{2px}
\begin{center}
\begin{tabular}{ |p{3.5cm} | p{8cm}|} \hline
$F(\mathcal{Q})$  & $\mathcal{Q}\cup\{-p+-f . \mathfrak{g}(p,2),~\mathfrak{g}(\mathfrak{g}(f,2),3) + \mathfrak{g}(p,2),...\}$ \\ \hline 
$E^1(F(\mathcal{Q})$ & $F(\mathcal{Q})~\cup~\{p,~ \mathfrak{g}(f,2)\}$ \\ \hline
$\kappa(E^1(F(\mathcal{Q}),\mathfrak{T}))$ & $F(\mathcal{Q})~\cup~\{p,~ \mathfrak{g}(f,2)\}$ \\  \hline 
$\varsigma(\kappa(E^1(F(\mathcal{Q}),\mathfrak{T}), \mathcal{Q})$ & $F(\mathcal{Q})~\cup~\{p,~ \mathfrak{g}(f,2)\}$ \\ \hline
$\mathcal{F}^1(\mathfrak{T})$ & $ F(\tau^1_\mathfrak{T}) = \{p,w,-f,...\}$ \\ \hline
\end{tabular}
\end{center}

Upon telescoping to degree 1, there are no contradictions in $E^1(F(\mathcal{Q}))$ (no $\bot-$kernel $\mathcal{X} \subseteq E^1(F(\mathcal{Q}$)). Hence, everything in $E^1(F(\mathcal{Q}))$ survives telescoping and is supported (notice the equality of $E^1(F(\mathcal{Q}))$, the kernel survivors, and the supported propositions in $E^1(F(\mathcal{Q}))$). At level 1, we believe that the bird we saw is indeed a penguin and accordingly has wings and does not fly.
\item  $\tau^2_\mathfrak{T}~=~\tau_\mathfrak{T}(\tau^1_\mathfrak{T})=~\varsigma(\kappa(E^1(F(\tau^1_\mathfrak{T})),\mathfrak{T}),\mathcal{Q})$
\vspace*{2px}
\begin{center}
\begin{tabular}{ |p{3.5cm} | p{8cm}|}\hline
$E^1(F(\tau^1_\mathfrak{T}))$  & $F(\tau^1_\mathfrak{T}) \cup \{f\}$ \\ \hline 
$\kappa(E^1(F(\tau^1_\mathfrak{T})),\mathfrak{T})$ & $E^1(F(\tau^1_\mathfrak{T})) - \{ p\}$ \\ \hline
$\varsigma(\kappa(E^1(F(\tau^1_\mathfrak{T})),\mathfrak{T}),\mathcal{Q})$ & $\kappa(E^1(F(\tau^1_\mathfrak{T})),\mathfrak{T})-\{-f,w\}$ \\ \hline
$\mathcal{F}^2(\mathfrak{T})$ & $F(\tau^2_\mathfrak{T})$ \\ \hline
\end{tabular}
\end{center}

Upon telescoping to degree 2, there are two $\bot-$kernels $\{f, -f\}$ and $\{ p, -p+-f, f\}$. $-f$ survives the first kernel as it is not graded in $\mathcal{Q}$. $f$ survives the first kernel as well as it is the only graded proposition in the kernel with another member $-f \notin F(\mathcal{Q})$. $p$ does not survive the second kernel as the kernel contains another graded proposition $f$ and the grade of $p$ (2) is less than the fused grade of $f$ $(mean(2,3) = 2.5)$. Accordingly, $-f$ loses its support and is not supported in the set of kernel survivors. The graded filter of degree 2 does not contain $p$ or $-f$, but $w$ is retained as it has nothing to do with the contradiction. At level 2, we start taking into account the information our brother told us. Since our combined trust in our brother and sister is higher that our trust in what we saw, we end up not believing that the bird we saw is a penguin since we believe that it flies.
\item  $\tau^3_\mathfrak{T} = \tau_\mathfrak{T}(\tau^2_\mathfrak{T})$\\
$\tau^3_\mathfrak{T}=\varsigma(\kappa(E^1(F(\tau^2_\mathfrak{T})),\mathfrak{T}),\mathcal{Q}) = \kappa(E^1(F(\tau^2_\mathfrak{T})),\mathfrak{T})$\\
$\mathcal{F}^3(\mathfrak{T}) = F(\tau^3_\mathfrak{T}) = \mathcal{F}^2(\mathfrak{T})$ reaching a fixed point.\qed
\end{itemize}  
}
\end{example}

Henceforth, given a $\G$ theory $\mathbb{T} \subseteq \sigma_P$ and a valuation $\mathcal{V} =
\langle \mathfrak{S}, \mathcal{V}_f,
\mathcal{V}_x\rangle$, let the valuation of $\mathbb{T}$ be denoted as $\mathcal{V}(\mathbb{T}) =
\{\interp{p}^\mathcal{V}~|~p \in \mathbb{T}\}$. We use graded filters to define graded consequence as follows. Further,
for a $\G$ structure $\mathfrak{S}$, an $\mathfrak{S}$ grading
canon is a triple $\mathcal{C} = \langle \otimes, \oplus,n\rangle$
where $n \in \mathbb{N}$ and $\otimes$ and $\oplus$ are as
indicated in Definition \ref{def:telstruct}.
\begin{definition}\label{def:cons}
Let $\mathbb{T}$ be a $\G$ theory. 
For every $p \in \sigma_P$, valuation $\mathcal{V}= \langle
\mathfrak{S}, \mathcal{V}_f, \mathcal{V}_x\rangle$ where $\mathfrak{S}$ has a set $\mathcal{P}$ which is
depth- and fan-out-bounded, and $\mathfrak{S}$ grading canon
$\mathcal{C} = \langle \otimes, \oplus,n\rangle$, $p$ is a
graded consequence of $\mathbb{T}$ with respect to $\mathcal{C}$,
denoted $\mathbb{T} \gcon^\mathcal{C} p$, if
$\mathcal{F}^n(\mathfrak{T})$ is defined and
$\interp{p}^\mathcal{V} \in \mathcal{F}^n(\mathfrak{T})$ for
every telescoping structure $\mathfrak{T} = \langle
\mathcal{V}(\mathbb{T}), \mathfrak{O},\otimes, \oplus\rangle$ for
$\mathfrak{S}$ where $\mathfrak{O}$ extends
$F(\mathcal{V}(\mathbb{T}) \cap Range(<))$\footnote{An
ultrafilter $U$ extends a filter $F$, if $F \subseteq U$.}. 
\end{definition}

It is worth noting that $\gcon^\mathcal{C}$ reduces to $\models$ if $n=0$ or if $F(E(\mathcal{V}(\mathbb{T})))$ does not contain any grading propositions. However, unlike $\models$, $\gcon^\mathcal{C}$ is non-monotonic in general. In what follows, let $\mathbb{T}^\mathcal{C} = \{p ~|~\mathbb{T} \gcon^\mathcal{C} p\}$. When we are considering a
set of canons which only differ in the value of $n$, we write
$\mathbb{T}^n$ instead of $\mathbb{T}^\mathcal{C}$.

The upcoming example showcases the operation of $\G$ on the classical non-monotonic reasoning example of  birds fly, but penguins are special birds that do not fly. 

\begin{example}
\normalfont{
Consider the following $\G$ theory $\mathbb{T}_{OT1}$.
\begin{enumerate}\itemsep0pt\itemsep0pt
\item $\forall x [Bird(x) \supset \mathbf{G}(Flies(x),5)]$
\item $\forall x [Penguin(x) \supset \mathbf{G}(\neg
Flies(x),10)]$ \item $\forall x [Penguin(x) \supset Bird(x)]$
\item $Penguin(Opus)$ \item $Bird(Tweety)$
\end{enumerate}
We show next the relevant graded
consequences of $\mathbb{T}_{OT1}$ with respect to a series of canons, with $0 \leq n \leq 1$.

\begin{figure}[h]
  \begin{center}
  \fbox{\parbox{3in}{
  \small{
  $$
  \begin{array}{l|ll}
    n=0 \hspace{2em} & 0.1. & \mathbb{T}_\mathrm{OT1} \\
    & 0.2. & Bird(Opus)\\
        & 0.3. &  \mathbf{G}(Flies(Tweety),5)  \\
        & 0.4. &  \mathbf{G}(Flies(Opus),5)  \\
        & 0.5. & \mathbf{G}(\neg Flies(Opus),10)\\
        & & \\
    n=1  & 1.1. &  Everything~at~n=0\\
    & 1.2. & Flies(Tweety)\\
    & 1.3. & \neg Flies(Opus)
  \end{array}
  $$
  }}}
  \end{center}
  \label{fig:ot}
\end{figure}
Upon telescoping to $n=1$, we believe that Tweety flies and Opus
does not fly. The embedded proposition that Opus flies does not
survive telescoping since we trust that Opus does not fly, being a
penguin, more than we trust that it flies, being a bird.
$\mathbb{T}^1_{OT1}$ is a fixed point. 

Now, consider the
theory $\mathbb{T}_\mathrm{OT2}$ which is similar to
$\mathbb{T}_\mathrm{OT1}$, but with propositions (1) and (2) replaced
by ``$\mathbf{G}(\forall x [Bird(x) \supset Flies(x), 5)$''
and ``$\mathbf{G}(\forall x [Penguin(x) \supset \neg Flies(x),
10)$'', respectively. Thus, we trade the ``\emph{de re}''
representation of $\mathbb{T}_\mathrm{OT1}$ for the ``\emph{de
dicto}'' representation in $\mathbb{T}_\mathrm{OT2}$. This change
results in a change in the fixed point that we reach. In
 $\mathbb{T}^1_\mathrm{OT2}$, as in $\mathbb{T}^1_\mathrm{OT1}$, we end up
 believing that Opus does not fly. Unlike
 $\mathbb{T}^1_\mathrm{OT1}$ however, we give up our belief in the
 proposition that birds fly and, hence, cannot conclude that
 Tweety flies. Being able to grade only the consequent in $\G$, as in $\mathbb{T}_\mathrm{OT1}$, allows us to give up believing that Opus flies while keeping the rule $\forall x [Bird(x) \supset \mathbf{G}(Flies(x),5)]$ which allows to conclude that Tweety flies. Grading only part of the rule is one of the strengths of $\G$ which is not the possible in many weighted logics. \qed
 }
\end{example}

\section{Argument Systems} \label{argument-systems}
Argument systems \cite{lin1989argument} assume a logical language $\mathcal{L}$ which is a set of well-formed formulas (wffs) restricted to contain $\neg \phi$ if $\phi$ is itself a wff. The operator $\neg$ has no logical properties in argument systems. The most distinctive feature of argument systems is that they are based entirely on inference rules which are defined as follows.
\begin{definition}
An \emph{inference rule} is a rule of the following formats:
\begin{enumerate}\itemsep0pt
    \item $A$, where $A$ is a wff. This is called a \emph{base fact}.
    \item $A_1,...,A_m \rightarrow B$. This is called a \emph{monotonic rule}.
     \item $A_1,...,A_m \Rightarrow B$. This is called a \emph{non-monotonic rule}.
\end{enumerate}
\end{definition}
By chaining rules together, we get \emph{arguments} which are used to establish propositions.
\begin{definition}
Let $R$ be a set of rules. An argument in $R$ is a rooted tree with labelled arcs defined as follows:
\begin{enumerate}\itemsep0pt
    \item If $A$ is a base fact, the tree consisting of only $A$ as a root is an argument.
    \item If $p_1,...,p_m$ are arguments whose roots are $A_1,...,A_m$, and $A_1,...,A_m \rightarrow B~(A_1,...,A_m \Rightarrow B) \in R$ such that $B$ is not a node in the trees $p_1,...,p_m$, then the tree $p$ with $B$ as its root and $p_1,...,p_m$ as its immediate subtrees is an argument. All the arcs from B to its children is labelled by the monotonic (non-monotonic) rule.
    \end{enumerate}
\end{definition}
An argument $p$ is said to be for $\phi$ (or $\phi$ is supported by $p$) if $\phi$ is the root of $p$. By grouping arguments together, we get an \emph{argument structure}. An argument structure can be thought of as the set of logically consistent arguments held by the agent.
\begin{definition} \label{argstructure}
Let $R$ be a set of rules, an argument structure $T$ is defined as follows:
\begin{enumerate}\itemsep0pt
    \item if $p$ is a base fact, then $p \in T$.
    \item $\forall p \in T$, if $p'$ is a subtree of $p$, then $p' \in T$ (T is closed)
    \item if $p$ is formed from $p_1,...,p_n \in T$ by a monotonic rule, then $p \in T$ (T is monotonically closed).
    \item $\forall \phi \in \mathcal{L}$, $T$ does not contain arguments for both $\phi$ and $\neg \phi$ (T is consistent).
\end{enumerate}
\end{definition}
For argument structures, a notion of \emph{completeness} is defined with respect to a formula $\phi$ as follows.
\begin{definition}
An argument structure $T$ is \emph{complete} with respect to $\phi$ if $T$ contains an argument for either $\phi$ or $\neg \phi$.
\end{definition}
Finally, the belief space of an agent is defined as the set of formulas supported by an argument structure.
\begin{definition}
The set of formulas supported by an argument structure $T$, $Wff(T) = \{ \phi~|~ \exists p \in T$ such that $\phi$ is supported by $p\}$.
\end{definition}
The resulting framework made up of the inference rule, arguments, and argument structures is referred to as an \emph{argument system}.
\begin{example} \label{example:argument-rules}
\normalfont{ This example is due to \cite{lin1989argument}. Let $R$ be the following set of rules:

\begin{tabular}{c l}
$r1:$ &$true$ \\  
$r2:$ &$penguin(A)$ \\ 
$r3:$ &$penguin(A) \rightarrow bird(A)$\\
$r4:$ &$bird(A), \neg abnormal(bird(A)) \rightarrow flies(A)$\\  
$r5:$ &$penguin(A),\neg abnormal(penguin(A)) \rightarrow \neg flies(A)$\\  
$r6:$ &$penguin(A) \rightarrow abnormal(bird(A))$\\
$r7:$ &$true \Rightarrow \neg abnormal(penguin(A))$\\ 
$r8:$ &$true \Rightarrow \neg abnormal(bird(A))$ 
\end{tabular}

There are 8 possible arguments in $R$:

\begin{tabular}{c l}
$p1:$ &$true$ \\  
$p2:$ &$penguin(A)$ \\  
$p3:$ &$p2 \xrightarrow{r3} bird(A)$\\
$p4:$ &$p2 \xrightarrow{r6} abnormal(bird(A))$\\  
$p5:$ &$p1 \xRightarrow{r7} \neg abnormal(penguin(A))$\\ 
$p6:$ &$p1 \xRightarrow{r8} \neg abnormal(bird(A))$ \\
$p7:$ &$p3,p6 \xrightarrow{r4} flies(A)$\\ 
$p8:$ &$p2,p5 \xrightarrow{r5} \neg flies(A)$\\
\end{tabular}

Further, there are two possible argument structures: 

\begin{tabular}{c l}
$T_1:$ &$\{p1,p2,p3,p4\}$ \\  
$T_2:$ &$\{p1,p2,p3,p4,p5,p8\}$ \\  
\end{tabular}

Only $T_2$ is complete with respect to both $abnormal(bird(A))$ and $abnormal(penguin(A))$.\qed
}
\end{example}
\section{Argument Systems in $\G$}\label{argument-systems-logag}
In this section, we show how to encode argument systems in $\G$ theories, and prove that the $\G$ graded consequence relation can capture the notions of argument structures and supported propositions by an argument structure.
We start by presenting a mapping function from the inference rules of argument systems to $\G$ propositional terms. 
\begin{definition}
Let the mapping $\pi: R \rightarrow \sigma_\mathcal{P}$ from a set of inference rules $R$ of an argument system to a set of $\G$ propositional terms be defined as follows.
\begin{enumerate}\itemsep0pt
    \item If $A$ is a base fact, $\pi(A) = A$.
    \item If  $A_1,...,A_m \rightarrow B$ is a monotonic rule, $\pi(A_1,...,A_m \rightarrow B) = (\bigwedge_{i=1}^{m} A_i) \supset B$. 
    \item If  $A_1,...,A_m \Rightarrow B$ is a non-monotonic rule,  $\pi(A_1,...,A_m \Rightarrow B) = (\bigwedge_{i=1}^{m} A_i) \supset B$. 
\end{enumerate}
Whenever $S$ is a set of rules, $\pi(S)=\{\pi(\phi) ~|~ \phi \in S\}$.
\end{definition}
While the mapping function $\pi$ maps monotonic and non-monotonic rules to similar $\G$ propositional terms, the corresponding mappings will be treated differently when the corresponding $\G$ theory is constructed as will be shown below.

The following definition describes how to use the mapping function $\pi$ to construct $\G$ theories capable of capturing argument structures and supported propositions in argument structures. 

In the sequel, let the function $chain(\phi,d)$ mapping a rule $\phi$ to a $\G$ term denoting a grading proposition be defined as follows.
$$chain(\phi, d) = \left\{
\begin{array}{l r}
  \mathbf{G}(\phi,1) & \mathrm{if~} d=1 \\
  \mathbf{G}(chain(\phi,d-1),1) & \mathrm{otherwise}.\\
\end{array}
\right.$$
Further, let $\wp(\mathcal{S})$ be the set of non-empty subsets of a set $\mathcal{S}$. An indexing of $\wp(\mathcal{S})$ is a bijection $I: \wp(\mathcal{S}) \rightarrow \{1,2,...,|\wp(\mathcal{S})|\}$.

\begin{definition} \label{translation}
Let $R$ be a set of rules of an argument system with $R_M \subseteq R$ and $R_{N \hspace*{-1px} M} \subseteq R$ being the sets of monotonic and non-monotonic rules therein respectively, and let $I$ be an indexing of $\wp(R_{N \hspace*{-1px} M})$. The $I$-translation of $R$ to a $\G$ theory, referred to as $\mathbb{T}^I_{R}$, is the union of a monotonic subtheory $\mathbb{M}_{R}$, and a non-monotonic subtheory $\mathbb{NM}^I_{R}$. $\mathbb{M}_{R}$ is the smallest set satisfying the following: 
\begin{enumerate}\itemsep0pt
\item for all base facts $A \in R$, $\pi(A) \in \mathbb{M}_{R}$.
\item for all monotonic rules $r \in R$, $\pi(r) \in \mathbb{M}_{R} $.
\end{enumerate}
The non-monotonic subtheory $\mathbb{NM}^I_{R}$ is the smallest set satisfying the following:
\begin{enumerate}\itemsep0pt
    \item for all $\mathcal{S} \in \wp(R_{N \hspace*{-1px} M})$ and $r \in \mathcal{S}$, $chain(\pi(r), I(\mathcal{S})) \in \mathbb{NM}^I_{R}$.
    \item for all $\mathcal{S} \in \wp(R_{N \hspace*{-1px} M})$ and $r \not \in \mathcal{S}$,  $chain(\pi(r),I(\mathcal{S})) \in \mathbb{NM}^I_{R}$ and  $chain(\neg\pi(r),I(\mathcal{S})) \in \mathbb{NM}^I_{R}$.
\end{enumerate}
\end{definition}

\begin{example} \label{translation-a}
\normalfont{
Consider the set of rules $R$ of an argument system in Example \ref{example:argument-rules}. The monotonic subtheory $\mathbb{M}_R$ of the corresponding $\G$ translation is made of the following propositional terms: 

\begin{tabular}{c l}
$t1:$ &$true$ \\  
$t2:$ &$penguin(A)$ \\ 
$t3:$ &$penguin(A) \supset bird(A)$\\
$t4:$ &$bird(A) \wedge \neg abnormal(bird(A)) \supset flies(A)$\\  
$t5:$ &$penguin(A) \wedge \neg abnormal(penguin(A)) \supset \neg flies(A)$\\ 
$t6:$ &$penguin(A) \supset abnormal(bird(A))$\\
\end{tabular}

The following table shows the sets in $\wp(R_{N \hspace*{-1px} M})$ together with the output of \emph{a possible} indexing $I$.
\vspace*{2px}
\begin{center}
\begin{tabular}{|l | c | c |}
\hline
& &$I(\mathcal{S}_i)$ \\ \hline 
$\mathcal{S}_1$ &$\{true \Rightarrow \neg abnormal(penguin(A))\}$ &$1$  \\ \hline
$\mathcal{S}_2$ &$\{true \Rightarrow \neg abnormal(bird(A))\}$ &$2$ \\ \hline
$\mathcal{S}_3$ &$\{true \Rightarrow \neg abnormal(penguin(A)), true \Rightarrow \neg abnormal(bird(A))\}$ & $3$ \\ \hline
\end{tabular}
\end{center}
\vspace*{2px}

The $I$-translation to a $\G$ theory $\mathbb{T}_R$ is the union of $\mathbb{M}_R$ and $\mathbb{NM}^{I}_R$. The non-monotonic subtheory $\mathbb{NM}^{I}_R$ is made of the following terms:

\begin{tabular}{c l}
$t7:$ &$\mathbf{G}(true \supset \neg abnormal(penguin(A)),1)$ \\
$t8:$ &$\mathbf{G}(true \supset \neg abnormal(bird(A)),1)$ \\ 
$t9:$ &$\mathbf{G}(\neg (true \supset \neg abnormal(bird(A))),1)$ \\ 
$t10:$ &$\mathbf{G}(\mathbf{G}(true \supset \neg abnormal(bird(A)),1),1)$ \\ 
$t11:$ &$\mathbf{G}(\mathbf{G}(true \supset \neg abnormal(penguin(A)),1),1)$ \\ 
$t12:$ &$\mathbf{G}(\mathbf{G}(\neg (true \supset \neg abnormal(penguin(A))),1),1)$ \\ 
$t13:$ &$\mathbf{G}(\mathbf{G}(\mathbf{G}(true \supset \neg abnormal(penguin(A)),1),1),1)$\\
$t14:$ &$\mathbf{G}(\mathbf{G}(\mathbf{G}(true \supset \neg abnormal(bird(A)),1),1),1)$\\ 
\end{tabular} 
}
\end{example}\qed

The basic intuition behind this construction is to capture all the possible argument structures with one $\G$ theory by utilizing the notion of successive levels of graded consequences. The idea is to construct $\mathbb{T}^I_R$ in a way such that all the rules in any possible argument structure are graded consequences at some level $n$. This is accomplished by translating the base facts and monotonic rules to equivalent $\G$ propositional non-graded (hence, certain) terms. In this way, base facts and monotonic rules will be graded consequences at all levels. The non-monotonic rules will be represented as graded (hence, uncertain) $\G$ propositional terms. Since each argument structure contains any possible subset of non-monotonic rules, we embed each non-monotonic rule in each possible subset $\mathcal{S} \in \wp(R_{N \hspace*{-1px} M})$ in a $\G$ grading propositional term at an embedding degree of $I(\mathcal{S})$ (condition 1 of the construction of $\mathbb{NM}^{I}_R$). To make sure that the rules that are graded consequences at level $n$ are only the rules in $\mathcal{S}$, we embed any rule not in $\mathcal{S}$ as well as its negation in a $\G$ grading propositional term at an embedding degree of $I(\mathcal{S})$ (condition 2 of the construction of $\mathbb{NM}^{I}_R$). It is worth noting that any possible indexing $I$ is possible as long as each set in $\wp(R_{N \hspace*{-1px} M})$ is assigned a unique index so that all the rules in one set will be embedded at a different level from the rules in another set. 

Henceforth, let $R$ be a  set of inference rules for some argument system with $R_M \subseteq R$ and $R_{N \hspace*{-1px} M} \subseteq R$ being the sets of monotonic and non-monotonic rules therein respectively. Throughout, we assume that $R_M$ is \emph{consistent}. Further, let $I$ be an indexing of $\wp(R_{N \hspace*{-1px} M})$, $\mathbb{T}^{I}_{R}=\mathbb{M}_R \cup \mathbb{NM}^{I}_R$ be the $I$-translation to a $\G$ theory, and the set of $\mathbb{T}^{I}_{R}$ interpretations be $\mathcal{T}=\mathcal{V}(\mathbb{T}^{I}_{R})$. Finally, let $T$ be an argument structure where $R(T)$ is the set of all base facts in $T$ union the set of all the rules appearing as arc labels in all arguments in $T$. The following simple observation follows directly from our construction.
\begin{obs} \label{structure-rules}
Let $R$ be a set of rules of an argument system. For any argument structure $T$,\\ $R(T) = R^T_M \cup \mathcal{S}$ where $R^T_M \subseteq R_M$ and either $\mathcal{S}=\varnothing$ if $R(T)$ contains no non-monotonic rules,  or $\mathcal{S}\in \wp(R_{N \hspace*{-1px} M})$ if $R(T)$ contains at least one non-monotonic rule where $\mathbb{M}_R \cup \mathcal{S}$ is consistent. 
\end{obs}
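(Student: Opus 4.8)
The plan is to split the statement into a routine structural decomposition and a single substantive consistency claim, and to establish the latter by exhibiting an explicit model built from the argument structure $T$.

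First I would dispatch the decomposition, which is immediate from the definitions. By the definition of $R(T)$, every element of $R(T)$ is either a base fact, a monotonic rule, or a non-monotonic rule. Collecting the base facts and monotonic rules into $R^T_M$ and the non-monotonic rules into $\mathcal{S}$ yields $R(T) = R^T_M \cup \mathcal{S}$, with $R^T_M \subseteq R_M$ (base facts being translated into $\mathbb{M}_R$ alongside the monotonic rules, as in Definition \ref{translation}) and $\mathcal{S} \subseteq R_{N \hspace*{-1px} M}$. If $R(T)$ contains no non-monotonic rule then $\mathcal{S}=\varnothing$; otherwise $\mathcal{S}$ is a non-empty subset of $R_{N \hspace*{-1px} M}$, hence $\mathcal{S}\in\wp(R_{N \hspace*{-1px} M})$.

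The substance is the consistency of $\mathbb{M}_R \cup \pi(\mathcal{S})$ (reading the displayed $\mathbb{M}_R \cup \mathcal{S}$ through the translation $\pi$). I would prove it by constructing a valuation $\mathcal{V}$ over the two-element subalgebra $\{\bot,\top\}$ of $\mathfrak{A}$ that satisfies every term of $\mathbb{M}_R \cup \pi(\mathcal{S})$. Treating each wff of $\mathcal{L}$ as a propositional atom and the argument-system $\neg$ as classical negation, I set $\interp{\phi}^\mathcal{V}=\top$ exactly when $\phi \in Wff(T)$; this is well defined and never assigns $\top$ to both $\phi$ and $\neg\phi$, precisely because $T$ is consistent (condition 4 of Definition \ref{argstructure}). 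I then check each conjunct: (i) every base fact lies in $Wff(T)$, since all base facts belong to $T$ (condition 1 of Definition \ref{argstructure}), so its translation is true; (ii) for a monotonic rule $A_1,\dots,A_m \rightarrow B$ whose translation is in $\mathbb{M}_R$, if all $A_i$ are true then each $A_i$ is supported in $T$, whence monotonic closure (condition 3 of Definition \ref{argstructure}) forces $B \in Wff(T)$, so $(\bigwedge_i A_i)\supset B$ holds, and it holds vacuously otherwise; (iii) for a non-monotonic rule in $\mathcal{S}$, its conclusion labels the root of an argument in $T$, so that conclusion lies in $Wff(T)$ and the corresponding implication is true unconditionally. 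Hence $\mathcal{V}$ is a model and the filter of $\mathbb{M}_R \cup \pi(\mathcal{S})$ is proper.

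The step I expect to be the main obstacle is item (ii), and more generally reconciling the inert $\neg$ of argument systems with the classical $\neg$ of $\G$: since $\G$'s consequence relation is strictly stronger than forward chaining (it validates contraposition and the like), a purely proof-theoretic induction on derivations would be delicate. Constructing a model sidesteps this, but only if monotonic closure of $T$ is invoked correctly, as it is exactly what guarantees that the monotonic rules present in $\mathbb{M}_R$ yet unused in $T$ cannot fire to a conclusion outside $Wff(T)$ and therefore cannot be falsified by the assignment. I would also dispose of the harmless edge case in which forming the closure argument for $B$ is blocked by $B$ already occurring as a node of a subargument, since then $B$ is already supported in $T$; and I would appeal to the standing assumption that $R_M$ is consistent, together with the consistency of $T$, to guarantee the assignment on atoms is coherent.
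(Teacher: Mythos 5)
The paper offers no actual proof of this observation---it is simply asserted to ``follow directly from our construction''---so the question is whether your reconstruction is sound. The structural decomposition in your first paragraph is fine, but the consistency argument contains a genuine gap at step (ii). Your valuation is defined by ``$\top$ exactly when $\phi \in Wff(T)$'', yet once the argument-system $\neg$ is read as classical negation this is not even a valuation: if neither $\psi$ nor $\neg\psi$ lies in $Wff(T)$ (which argument structures permit, since completeness is not assumed), your clause assigns $\bot$ to both, while classical negation forces one of them to be $\top$. The natural repair---define the assignment on unnegated wffs only and let negation determine the rest---exposes the deeper problem: a premise $A_i$ of the form $\neg\psi$ can then be true under the assignment merely because $\psi \notin Wff(T)$, without $\neg\psi$ being supported in $T$. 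Monotonic closure (condition 3 of Definition \ref{argstructure}) only closes $T$ under rules whose premises are \emph{supported}, so it gives no control over such rules, and they can be falsified; your claim that closure guarantees unused monotonic rules ``cannot fire to a conclusion outside $Wff(T)$'' is exactly where the argument breaks.

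Moreover, the gap is not patchable, because under the classical reading of ``consistent'' (the reading the paper itself needs when this observation feeds into Propositions \ref{prop:rules} and \ref{prop:n-filter}) the claim can outright fail. Take $R$ with base fact $true$, monotonic rules $\neg p \rightarrow q$, $p \rightarrow r$, $q \rightarrow r$, and the single non-monotonic rule $true \Rightarrow \neg r$. Then $\mathbb{M}_R = \{true,\ \neg p \supset q,\ p \supset r,\ q \supset r\}$ is classically consistent (make $p,q,r$ all true), and the structure $T$ consisting of the base-fact argument for $true$ and the non-monotonic argument for $\neg r$ is a legitimate argument structure: it is closed, monotonically closed (no monotonic rule has all its premises supported in $T$), and consistent, with $Wff(T) = \{true, \neg r\}$. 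Yet $\mathbb{M}_R \cup \pi(\mathcal{S}) = \mathbb{M}_R \cup \{true \supset \neg r\}$ is classically inconsistent: $\neg r$ yields $\neg p$ and $\neg q$ by contraposition, and $\neg p \supset q$ then yields $q$. Since no model exists, no choice of valuation can complete your argument. This is precisely the mismatch you yourself flagged as the ``main obstacle''---classical consequence validates contraposition while argument-system chaining does not---but rather than being sidestepped by the model construction, it shows that the statement requires either a weaker, forward-chaining notion of consistency or additional hypotheses on $R$ that neither you nor the paper supplies.
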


The following observation states that, in our construction, the fused grade of $\pi(r)$ at level $n$ has a fused of grade $\mathfrak{f}_\mathfrak{T}(\interp{\pi(r)}^\mathcal{V},E^n(\mathcal{T}))$ of $n$ if we choose the grade fusion operator $\otimes=sum$ and $\oplus=max$. 

\begin{obs}\label{fused-grade}
For any telescoping structure $\mathfrak{T}=\langle \mathcal{T}, \mathfrak{O}, \otimes, \oplus\rangle$ with $\otimes = sum$ and $\oplus = max$,\\ if $chain(\pi(r),n) \in \mathbb{NM}^{I}_R$, $\mathfrak{f}_\mathfrak{T}(\interp{\pi(r)}^\mathcal{V},E^n(\mathcal{T})) = n$. 
\end{obs}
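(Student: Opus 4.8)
The plan is to unfold the recursive definition of $chain(\cdot,\cdot)$ and then read the fusion off directly. First I would observe that, by its definition, $chain(\pi(r),n)$ interprets under $\mathcal{V}$ to a single grading chain of $\rho := \interp{\pi(r)}^{\mathcal{V}}$ consisting of exactly $n$ nested applications of $\mathfrak{g}$, every link carrying the same grade $1$. So this term contributes one chain of length $n$ in which all grades equal $1$.

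Next I would pin down which grading chains of $\rho$ are actually seen by $\mathfrak{f}_\mathfrak{T}(\cdot,E^n(\mathcal{T}))$. By Definition \ref{translation} together with the bijectivity of $I$, the term $chain(\pi(r),m)$ lies in $\mathbb{NM}^I_R$ for every $m$ in the range of $I$, so in $\mathcal{T}$ the proposition $\rho$ is graded by chains of several lengths simultaneously. The restriction of the second argument to $E^n(\mathcal{T})$ (Definitions \ref{def:embed} and \ref{def:degembed}) is precisely what confines attention to the chains reaching $\rho$ within embedding depth $n$; the hypothesis $chain(\pi(r),n)\in\mathbb{NM}^I_R$ guarantees that the length-$n$ chain is among these, and the claim to establish is that none of the chains feeding the fusion at this stage is longer than $n$.

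Granting that characterization, the computation is immediate. Along the length-$n$ chain every grade is $1$, so with $\otimes=sum$ its fused value is $\underbrace{1+\cdots+1}_{n}=n$; any admissible shorter chain of length $m\le n$ fuses to $m\le n$ in the same way; and since $\oplus=max$ combines these across chains, $\mathfrak{f}_\mathfrak{T}(\rho,E^n(\mathcal{T}))=\max\{\,m \mid m\le n\,\}=n$. I would also check in passing that the negated chains $chain(\neg\pi(r),m)$ introduced by condition~2 grade $\neg\pi(r)$, not $\pi(r)$, and hence do not contribute to the fusion of $\rho$.

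The main obstacle is the middle step. Because the translation asserts $chain(\pi(r),m)$ for \emph{every} admissible length $m$, the intermediate grading propositions obtained by stripping outer occurrences of $\mathbf{G}$ are themselves top-level members of $\mathcal{T}$, so a naive depth count threatens to collapse the entire family of chains to one embedding level and make the $\oplus$-maximum equal to the \emph{global} longest chain rather than to $n$. The delicate point is therefore to invoke the definitions of $E^n$ and of $\mathfrak{f}_\mathfrak{T}$ exactly as given and show that restricting the second argument to $E^n(\mathcal{T})$ caps the lengths of the chains entering the fusion at $n$, so that the maximum is realized by the length-$n$ chain supplied by the hypothesis. Getting this bookkeeping right is where the real work lies; the algebraic $sum$/$max$ computation is routine once the correct chain length has been isolated.
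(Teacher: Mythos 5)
Your computational core is correct and coincides with the paper's: along any chain produced by $chain(\cdot,\cdot)$ every grade is $1$, so $\otimes=sum$ fuses a length-$m$ chain to $m$; $\oplus=max$ then dismisses shorter chains; and the chains $chain(\neg\pi(r),m)$ grade $\neg\pi(r)$, not $\pi(r)$, so they are irrelevant. The problem is that you explicitly grant, rather than prove, the one claim that carries all the weight: that taking $E^n(\mathcal{T})$ as the second argument of $\mathfrak{f}_\mathfrak{T}$ caps the contributing chains at length $n$. Your own analysis shows why this cannot be read off the definitions reproduced in this paper: by Definition \ref{translation}, $chain(\pi(r),k)\in\mathbb{NM}^{I}_R$ for \emph{every} $k$ in the range of $I$, so every grading proposition in every chain over $\interp{\pi(r)}^\mathcal{V}$ is itself a member of $\mathcal{T}$, hence has embedding degree $0$, hence lies in $E^n(\mathcal{T})$ for every $n$. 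Under a set-membership reading of ``chains of $\interp{\pi(r)}^\mathcal{V}$ in $E^n(\mathcal{T})$'' the fusion would therefore return the global longest chain length $|\wp(R_{N\hspace*{-1px}M})|$ for every $n\geq 1$, which contradicts the observation and would break the tie-breaking mechanism that Proposition \ref{prop:n-filter} and Theorem \ref{theorem1} depend on. Discharging this step requires the level-relative definition of $\mathfrak{f}_\mathfrak{T}$ from the cited thesis, under which fusion at level $n$ ranges only over the chains witnessing embedding within depth $n$; since you neither state nor invoke that definition, your argument remains an outline whose only non-routine step is deferred, by your own admission (``this is where the real work lies'').

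For comparison, the paper's own proof of Observation \ref{fused-grade} has the same shape but less candor: it notes that all grades are $1$, concludes that the rule embedded at level $n$ has fused grade $n$, and uses $max$ only to dismiss \emph{shorter} chains; it never addresses why the \emph{longer} chains, which by the argument above are present in $E^n(\mathcal{T})$ as a set, do not contribute. So your proposal is not mathematically weaker than the paper's, and it has the merit of isolating the genuine subtlety; but neither text constitutes a self-contained proof, and a complete one must make the level-restricted chain semantics of $\mathfrak{f}_\mathfrak{T}$ explicit exactly at the point you flagged.
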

\begin{proof}
Taking the $\otimes$ as the sum operator and the $\oplus$ as the max operator, since all grades in all grading chains in $\mathbb{NM}^{I}_R$ are 1s, then the interpretation of the embedded rule $\interp{\pi(r)}^\mathcal{V}$ at level $n$ has a grade of $n$. This is because, even if $\pi(r)$ appears graded in shorter chains, the $\oplus$ operator will set the grade of $\interp{\pi(r)}^\mathcal{V}$ to the deeper depth $n$. 
\end{proof}
It follows directly from Observation \ref{fused-grade} that the rules with higher embedding degrees have higher grades than the rules with lower embedding degrees.

We introduce the following notation for the ease of readability of our upcoming proofs. Let the set of \emph{embedded graded rules} be $ER_n=\{\interp{\pi(r)}^\mathcal{V} ~|~  chain(\pi(r),n) \in \mathbb{NM}^{I}_R~and~ chain(\pi(\neg r),n) \not\in \mathbb{NM}^{I}_R\}$, and $G_{\mathcal{T}}$ be the set of all \emph{grading propositions} in $E^n(\mathcal{T})$. The following observation states that if the filter of the base facts, the monotonic rules (which we originally assume to be consistent), and the non-monotonic rules embedded at level $n$ is consistent, then an embedded rule $\interp{\pi(r)}^\mathcal{V}$ at level $n$ while $\interp{\pi(\neg r)}^\mathcal{V}$ is not embedded at the same level is a member of the graded filter of degree $n$. The intuition is that $\interp{\pi(r)}^\mathcal{V}$ must have a higher grade given our construction and Observation \ref{fused-grade}. 

\begin{obs}\label{survivors}
For any telescoping structure $\mathfrak{T}=\langle \mathcal{T}, \mathfrak{O}, \otimes, \oplus\rangle$ with $\otimes = sum$ and $\oplus = max$,
\\ if $F(\mathcal{V}(\mathbb{M_R} \cup ER_n)$ is consistent ($\not=\mathcal{P})$ and $\interp{\pi(r)}^\mathcal{V} \in ER_n$, then $\interp{\pi(r)}^\mathcal{V} \in \mathcal{F}^n(\mathfrak{T})$ and $\interp{\neg \pi(r)}^\mathcal{V} \not\in \mathcal{F}^n(\mathfrak{T})$.
\end{obs}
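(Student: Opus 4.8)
The plan is to establish the positive claim $\interp{\pi(r)}^\mathcal{V} \in \mathcal{F}^n(\mathfrak{T})$ first, and then obtain $\interp{\neg\pi(r)}^\mathcal{V} \notin \mathcal{F}^n(\mathfrak{T})$ as an easy consequence. For the second conjunct I would argue that $F(\mathcal{T})$ is proper: the top theory consists of the consistent monotonic base $\mathbb{M}_R$ together with grading propositions, and the conflicting rules $\pi(r'), \neg\pi(r')$ occur only \emph{inside} grading terms, never as top-level conjuncts, so they cannot classically force $\bot$. Applying Observation \ref{obs:consistency} along the telescoping recursion then shows the graded filter $\mathcal{F}^n(\mathfrak{T}) = F(\tau^n_\mathfrak{T}(\mathcal{T}))$ is itself proper. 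Since a proper filter cannot contain an element together with its complement (their meet $\bot$ would then lie in it) and $\interp{\neg\pi(r)}^\mathcal{V} = -\interp{\pi(r)}^\mathcal{V}$, the membership $\interp{\pi(r)}^\mathcal{V} \in \mathcal{F}^n(\mathfrak{T})$ immediately forces $\interp{\neg\pi(r)}^\mathcal{V} \notin \mathcal{F}^n(\mathfrak{T})$.

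For the main claim, recall that $ER_n$ is exactly the image under $\interp{\cdot}^\mathcal{V}$ of the unique subset $\mathcal{S}_n$ of non-monotonic rules with $I(\mathcal{S}_n)=n$, embedded positively and \emph{without} their negations at level $n$. I would first show, by induction on the telescoping level $k \le n$, that $\interp{\pi(r)}^\mathcal{V}$ is \emph{exposed} at level $n$, i.e.\ $\interp{\pi(r)}^\mathcal{V} \in E^1(F(\tau^{n-1}_\mathfrak{T}(\mathcal{T})))$: each telescoping step peels one $\mathbf{G}$-wrapper off the chain $chain(\pi(r),n)$, and the intermediate grading propositions survive and are supported because, as grading atoms, they lie in no $\bot$-kernel and are supported through the remainder of their own chain (Definition \ref{def:support}).

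The crux is then to show $\interp{\pi(r)}^\mathcal{V}$ is a kernel survivor at level $n$. Let $\mathcal{X} \subseteq E^1(F(\tau^{n-1}_\mathfrak{T}(\mathcal{T})))$ be any $\bot$-kernel containing $\interp{\pi(r)}^\mathcal{V}$. By Observation \ref{fused-grade}, $\mathfrak{f}_\mathfrak{T}(\interp{\pi(r)}^\mathcal{V},\cdot) = n$, the maximal fused grade at this level. Every ungraded proposition available here comes from $\mathbb{M}_R$ and hence lies in $F(\mathcal{T})$, so conditions (1) and (2) of Definition \ref{def:surviving} are unavailable and I must invoke condition (3): a graded $q \in \mathcal{X}$ with $q \notin F(\mathcal{T})$ and $(\mathfrak{f}_\mathfrak{T}(q,\cdot) < \mathfrak{f}_\mathfrak{T}(\interp{\pi(r)}^\mathcal{V},\cdot)) \in \mathfrak{O}$. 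Since $F(\mathcal{V}(\mathbb{M}_R \cup ER_n))$ is consistent, $\mathcal{X} \not\subseteq \mathcal{V}(\mathbb{M}_R) \cup ER_n$, so $\mathcal{X}$ contains a graded $q$ outside this set; the remaining task is to guarantee such a $q$ has fused grade strictly below $n$.

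This last point is where I expect the real difficulty. The graded propositions outside $\mathcal{V}(\mathbb{M}_R)\cup ER_n$ split into those embedded at lower levels (fused grade $<n$, for which condition (3) applies directly) and the complementary pairs $\{\interp{\pi(r')}^\mathcal{V}, \interp{\neg\pi(r')}^\mathcal{V}\}$ with $r' \notin \mathcal{S}_n$, which also carry fused grade $n$. I would dispatch the latter by showing these pairs only ever contradict \emph{each other}: each is itself a $\bot$-kernel in which neither member survives (equal grades), and, using consistency of $\mathbb{M}_R \cup ER_n$, no minimal inconsistent set can contain $\interp{\pi(r)}^\mathcal{V}$ together with one polarity $\interp{\neg\pi(r')}^\mathcal{V}$ and the monotonic base, since that would force $\mathbb{M}_R \cup ER_n \models \pi(r')$ and break the kernel's minimality. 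Granting this, every kernel through $\interp{\pi(r)}^\mathcal{V}$ carries a strictly-lower-graded witness, condition (3) applies, and $\interp{\pi(r)}^\mathcal{V} \in \kappa(E^1(F(\tau^{n-1}_\mathfrak{T}(\mathcal{T}))),\mathfrak{T})$; being supported through its own surviving grading chain (Definition \ref{def:support}), it then lies in $\tau^n_\mathfrak{T}(\mathcal{T})$ and hence in $\mathcal{F}^n(\mathfrak{T})$. The equal-grade negation pairs are the subtle step the argument must nail down carefully; everything else is bookkeeping on the telescoping recursion.
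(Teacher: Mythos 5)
Your route differs from the paper's in both halves. For the second conjunct you propagate properness of $F(\mathcal{T})$ through the telescoping recursion via Observation \ref{obs:consistency} and then use the fact that a proper filter cannot contain both an element and its complement; this is sound and is actually more explicit than the paper, which never argues that conjunct separately. For the first conjunct you reason about arbitrary $\bot$-kernels through $\interp{\pi(r)}^\mathcal{V}$, whereas the paper's proof never does: it argues only by cases on how the direct negation $\interp{\neg\pi(r)}^\mathcal{V}$ could occur in $E^n(F(\mathcal{T}))$ (in $F(\mathcal{T})$, contradicting the consistency hypothesis; embedded at depth $n$, contradicting the construction; or supported only by lower-level embedded propositions, which then lose to the grade-$n$ rule). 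In this respect your analysis is more honest about where the danger lies.

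However, the step you yourself flag as the crux contains a genuine gap. Your dispatch of the equal-grade complementary pairs rests on the claim that a minimal inconsistent set containing $\interp{\pi(r)}^\mathcal{V}$, one member of a pair for $r' \notin \mathcal{S}$, and monotonic-base elements ``would force $\mathbb{M}_R \cup ER_n \models \pi(r')$ and break the kernel's minimality.'' That inference is a non sequitur: such an entailment is perfectly compatible with the consistency of $\mathbb{M}_R \cup ER_n$ and does not violate subset-minimality of the kernel. Concretely, take $R$ with base fact $A$, monotonic rule $A, B \rightarrow C$, and non-monotonic rules $r = (true \Rightarrow B)$, $r' = (true \Rightarrow \neg C)$; let $\mathcal{S} = \{r\}$ and $n = I(\mathcal{S})$. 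Then $ER_n = \{\interp{\pi(r)}^\mathcal{V}\}$ and $F(\mathcal{V}(\mathbb{M}_R \cup ER_n))$ is proper, so the hypothesis of the observation holds; but since $r' \notin \mathcal{S}$, clause 2 of Definition \ref{translation} embeds both polarities of $\pi(r')$ at level $n$ with fused grade $n$ (Observation \ref{fused-grade}), and the set consisting of the interpretations of $A$, $(A \wedge B) \supset C$, $B$, and $\neg C$ is a subset-minimal inconsistent subset of $E^1(F(\tau^{n-1}_\mathfrak{T}(\mathcal{T})))$: every proper subset is satisfiable. In this kernel $\interp{\pi(r)}^\mathcal{V}$ satisfies none of the three conditions of Definition \ref{def:surviving}: it is graded; the ungraded members lie in $F(\mathcal{T})$; and the only other graded member has fused grade exactly $n$, not strictly smaller. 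Note that removing the $\neg C$ element leaves a set entailing $C$, i.e.\ $\mathbb{M}_R \cup ER_n \models \neg\pi(r')$ --- exactly the situation you claim is impossible, yet nothing is broken. So your argument cannot conclude kernel survival, and (since the support step does not resurrect kernel casualties, cf.\ level 2 of Example \ref{ex:logag}, where $p$ is excluded although its grader survives) the membership $\interp{\pi(r)}^\mathcal{V} \in \mathcal{F}^n(\mathfrak{T})$ does not follow. For what it is worth, the paper's own proof is silent on precisely this configuration --- it considers no kernels other than those generated by the direct negation of $\pi(r)$ --- so you located the genuinely delicate point; but the patch you propose does not close it.
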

\begin{proof}
Suppose that $F(\mathcal{V}(\mathbb{M_R} \cup ER_n)$ is consistent.
If $chain(\pi(r),n) \in \mathbb{NM}^{I}_R$ and $chain(\neg \pi(r),n) \not \in \mathbb{NM}^{I}_R$, then
by Observation \ref{fused-grade}, the fused grade of $\interp{\pi(r)}^\mathcal{V}$ in $E^n(\mathcal{T})$ is $n$. If $\interp{\pi(\neg r)}^\mathcal{V} \not\in E^n(F(\mathcal{T}))$, then $\interp{\pi(r)}^\mathcal{V}$ survives telescoping and is supported at level $n$ and $\interp{\pi(r)}^\mathcal{V} \in \mathcal{F}^n(\mathfrak{T})$. Otherwise, if $\interp{\pi(\neg r)}^\mathcal{V} \in E^n(F(\mathcal{T}))$, we have three cases.
\begin{enumerate}
    \item $\interp{\pi(\neg r)} ^\mathcal{V} \in F(\mathcal{T})$. But this implies that $F(\mathcal{V}(\mathbb{M}_R)~ \cup~ER_n)$ is inconsistent. Hence, we get a contradiction; or
    \item $\interp{\pi(\neg r)}^\mathcal{V}$ is embedded in a grading chain of length $n$. This can not be as $chain(\neg \pi(r),n) \not \in \mathbb{NM}^{I}_R$; or
    \item $\interp{\pi(\neg r)}^\mathcal{V}$ is supported by some graded propositions embedded at a degree of at most $n$. If $\interp{\pi(\neg r)}^\mathcal{V}$ is supported by at least a graded proposition with an embedding degree of $n$, we get a contradiction as $F(\mathcal{V}(\mathbb{M}_R)~\cup ~ER_n)$ must be inconsistent. Then, it must be that $\interp{\pi(\neg r)}^\mathcal{V}$ is supported by graded propositions of embedding degrees less than $n$. In this case, however, all such graded propositions will not survive telescoping as $\interp{\pi(\neg r)}^\mathcal{V}$ has a higher grade depriving $\interp{\pi(\neg r)}^\mathcal{V}$ of its support. It follows then that $\interp{\pi(r)}^\mathcal{V}$ survives telescoping and is supported at level $n$. Hence, $\interp{\pi(r)}^\mathcal{V} \in \mathcal{F}^n(\mathfrak{T})$. 
\end{enumerate}
\end{proof}

The following proposition will prove to be very useful in the remaining of this section. It states that if the filter of the base facts, the monotonic rules (which we originally assume to be consistent), and the non-monotonic rules embedded at level $n$ is consistent, then the graded filter of degree $n$ is equal to the valuation of the monotonic subtheory, all the embedded rules in $ER_n$, and all the grading propositions in $G_\mathcal{T}$.

\begin{proposition} \label{prop:n-filter}
If $F(\mathcal{V}(\mathbb{M}_R)\cup~ER_n)$ is consistent, then $\mathcal{F}^n(\mathfrak{T})=F(\mathcal{V}(\mathbb{M}_R)\cup~ ER_n~\cup~G_\mathcal{T}) $ for every telescoping structure $\mathfrak{T}=\langle \mathcal{T}, \mathfrak{O}, sum, max \rangle$.
\end{proposition}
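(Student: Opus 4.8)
The plan is to prove the two inclusions separately after rewriting the graded filter in a more tractable form. Since $\mathcal{F}^n(\mathfrak{T}) = F(\tau^n_\mathfrak{T}(\mathcal{T}))$ and each application of $\tau_\mathfrak{T}$ returns, by Observation \ref{supported-filter}, the set $F(\mathcal{T})$ together with some collection of embedded graded propositions, I would first record that $\mathcal{F}^n(\mathfrak{T}) = F(\mathcal{T} \cup EG)$ for a set $EG \subseteq E^n(\mathcal{T})$ of embedded graded propositions admitted after the $n$ telescoping steps (that $EG \subseteq E^n(\mathcal{T})$ follows because $\tau^n_\mathfrak{T}$ iterates $E^1$-telescoping $n$ times and so reaches only propositions at original depth at most $n$). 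Since every top-level term of $\mathbb{NM}^{I}_R$ is a grading proposition, $\mathcal{T} \subseteq \mathcal{V}(\mathbb{M}_R) \cup G_\mathcal{T}$, and the statement reduces to showing that $EG$ consists exactly of $ER_n$ together with grading propositions drawn from $G_\mathcal{T}$.

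For the inclusion $F(\mathcal{V}(\mathbb{M}_R) \cup ER_n \cup G_\mathcal{T}) \subseteq \mathcal{F}^n(\mathfrak{T})$ I would treat the three generating sets in turn. The interpretations of $\mathbb{M}_R$ are ungraded and lie in $\mathcal{T} \subseteq F(\mathcal{T})$, so by clause~1 of Definition \ref{def:surviving} and clause~1 of Definition \ref{def:support} they survive every kernel and are supported, hence persist at every level. The membership of $ER_n$ is precisely Observation \ref{survivors}, which under the consistency hypothesis yields $\interp{\pi(r)}^{\mathcal{V}} \in \mathcal{F}^n(\mathfrak{T})$ for each $\interp{\pi(r)}^{\mathcal{V}} \in ER_n$. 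For $G_\mathcal{T}$ I would argue that every grading proposition in $E^n(\mathcal{T})$ is supported---either it is a full chain lying in $\mathcal{T}$ (clause~1 of Definition \ref{def:support}) or it is a shorter chain obtained by peeling, whose outermost grader is a full chain in $\mathcal{T}$, so support propagates down the chain by clause~2---and that it survives every kernel, since in our construction no two grading propositions are jointly inconsistent (a term $\mathbf{G}(\psi,1)$ never conflicts with $\mathbf{G}(\chi,1)$) and, by the consistency hypothesis, the only $\bot$-kernels involve bare rules. Hence $G_\mathcal{T} \subseteq \mathcal{F}^n(\mathfrak{T})$.

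For the reverse inclusion I would use $\mathcal{F}^n(\mathfrak{T}) = F(\mathcal{T} \cup EG)$ and show $EG \subseteq ER_n \cup G_\mathcal{T}$. An element of $EG$ is an embedded graded proposition, hence either a grading proposition---which then lies in $G_\mathcal{T}$ because $EG \subseteq E^n(\mathcal{T})$---or a bare rule, so it remains to exclude every bare rule not in $ER_n$. If $chain(\pi(r),n) \notin \mathbb{NM}^{I}_R$ then $\interp{\pi(r)}^{\mathcal{V}}$ is not exposed at depth $n$ and cannot be admitted. Otherwise $chain(\neg r,n) \in \mathbb{NM}^{I}_R$, so both $\interp{\pi(r)}^{\mathcal{V}}$ and $\interp{\neg r}^{\mathcal{V}}$ are exposed and, by Observation \ref{fused-grade}, share the fused grade $n$; the pair $\{\interp{\pi(r)}^{\mathcal{V}}, \interp{\neg r}^{\mathcal{V}}\}$ is a $\bot$-kernel in which clause~3 of Definition \ref{def:surviving} fails for both members, since the required proposition $(\mathfrak{f}_\mathfrak{T}(\cdot,\mathcal{Q}) < \mathfrak{f}_\mathfrak{T}(\cdot,\mathcal{Q}))$ is here the interpretation of $n \prec n$, which is $\bot$ and so is absent from the ultrafilter $\mathfrak{O}$. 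Thus neither survives and neither enters $EG$.

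The step I expect to be the main obstacle is pinning down $EG$ exactly in this last direction. Observations \ref{fused-grade} and \ref{survivors} let me treat the level-$n$ fused grades as if computed in $E^n(\mathcal{T})$, which avoids reasoning directly about the iterate $\tau^n_\mathfrak{T}$, but the delicate point is that the hypothesis that $F(\mathcal{V}(\mathbb{M}_R) \cup ER_n)$ is proper must be shown to confine all inconsistency-induced removals to the rule/negation pairs of condition~2 of Definition \ref{translation}: it guarantees simultaneously that no member of $ER_n$ is itself expelled by a kernel formed with the monotonic content, and, via Observation \ref{obs:consistency}, that no collapse of the filter forces in anything outside $ER_n \cup G_\mathcal{T}$. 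One must also verify that discarding the non-surviving bare rules does not strip the support of any grading proposition in $G_\mathcal{T}$; this holds because each such proposition is supported through its upward grading chain anchored in $\mathcal{T}$, independently of which bare rules survive at the bottom.
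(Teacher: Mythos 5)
Your proposal is correct and follows essentially the same route as the paper's own proof: both inclusions are established the same way, using Observation \ref{supported-filter} to write $\mathcal{F}^n(\mathfrak{T})=F(\mathcal{T}\cup EG)$, Observation \ref{survivors} for the membership of $ER_n$, the construction of $\mathbb{NM}^I_R$ for the grading propositions in $G_\mathcal{T}$, and Observation \ref{fused-grade} to rule out the rule/negation pairs embedded at the same depth. Your added detail (the $n \prec n$ ultrafilter argument for why equal-grade pairs cannot survive, and the check that discarded bare rules do not strip support from grading chains anchored in $\mathcal{T}$) only makes explicit what the paper leaves implicit.
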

\begin{proof}
We prove this by showing that $F(\mathcal{V}(\mathbb{M}_R)~\cup~ ER_n \cup~ G_\mathcal{T}) \subseteq \mathcal{F}^n(\mathfrak{T})$ and $\mathcal{F}^n(\mathfrak{T})\subseteq F(\mathcal{V}(\mathbb{M}_R)~\cup~ ER_n~\cup~\\G_\mathcal{T})$. Suppose that $F(\mathcal{V}(\mathbb{M}_R)~ \cup~ER_n)$ is consistent. Let $\interp{\pi(\phi)}^\mathcal{V} \in \mathcal{V}(\mathbb{M}_R)~\cup~ ER_n ~\cup~ G_\mathcal{T}$. Therefore, it must be one of the following cases.
\begin{enumerate}\itemsep0pt
    \item $\interp{\pi(\phi)}^\mathcal{V} \in \mathcal{V}(\mathbb{M}_R)$. In this case, $\interp{\pi(\phi)}^\mathcal{V}$ survives telescoping and is supported at level $n$ since all members of the top theory $\mathcal{T}$ survive telescoping and are supported at all levels.
    \item $\interp{\pi(\phi)}^\mathcal{V} \in ER_n$. By Observation \ref{survivors}, $\interp{\pi(\phi)}^\mathcal{V}$ survives telescoping and is supported at level $n$.
\item $\interp{\pi(\phi)}^\mathcal{V} \in G_{\mathcal{T}}$. The only possible grading propositions come from $\mathbb{NM}^I_{R}$ or embedded grading propositions in $\mathbb{NM}^I_{R}$. The interpretations of such grading propositions must survive telescoping and are supported at level $n$ since such grading propositions are never members of $\bot$-kernels given the construction of $\mathbb{NM}^I_{R}$. 
\end{enumerate}
Hence, $\mathcal{V}(\mathbb{M}_R) \cup ER_n \cup G_\mathcal{T}$ survive telescoping and is supported at level $n$. By the definition of graded filters then, $F(\mathcal{V}(\mathbb{M}_R)~\cup~ER_n~\cup~G_\mathcal{T}) \subseteq \mathcal{F}^n(\mathfrak{T})$.

Now, we proceed to proving that $\mathcal{F}^n(\mathfrak{T})\subseteq F(\mathcal{V}(\mathbb{M}_R)~\cup~ ER_n~\cup~G_\mathcal{T})$. According to Observation \ref{supported-filter} and the definition of graded filters, $\mathcal{F}^n(\mathfrak{T})=F(F(\mathcal{T})~\cup~EG) =  F(\mathcal{T}~\cup~EG)= F( \mathcal{V}(\mathbb{M}_R)~\cup~\mathcal{V}(\mathbb{NM}^{I}_{R})~\cup~ EG) $ for some set of embedded graded propositions $EG$ in $E^n(\mathcal{T})$ that survive telescoping and are supported at level $n$. Let $\interp{\pi(\phi)}^\mathcal{V} \in \mathcal{V}(\mathbb{M}_R)~\cup~\mathcal{V}(\mathbb{NM}^{I}_{R})~ \cup~ EG $. Therefore, one of the following cases is true.
\begin{enumerate}\itemsep0pt
    \item $\interp{\pi(\phi)}^\mathcal{V}\in \mathcal{V}(\mathbb{M}_R)$. It follows trivially then that $\interp{\pi(\phi)}^\mathcal{V} \in \mathcal{V}(\mathbb{M}_R)~\cup~ ER_n~\cup~G_\mathcal{T}$.
    \item $\interp{\pi(\phi)}^\mathcal{V} \in \mathcal{V}(\mathbb{NM}^{I}_{R})$. Hence, $\interp{\pi(\phi)}^\mathcal{V} \in G_\mathcal{T}$ by the definition of $G_\mathcal{T}$ and the construction of $\mathbb{NM}^{I}_{R}$.
    \item $\interp{\pi(\phi)}^\mathcal{V} \in EG$. If $\interp{\pi(\phi)}^\mathcal{V}$ is a grading proposition, then $\interp{\pi(\phi)}^\mathcal{V} \in G_\mathcal{T}$. Otherwise, $\interp{\pi(\phi)}^\mathcal{V}$ is a graded proposition, then according to Definition \ref{translation} either $chain(\pi(\phi),n) \in \mathbb{NM}^{I}_{R}$ and\\ $chain(\neg \pi(\phi),n) \not\in \mathbb{NM}^{I}_{R}$ or $chain(\pi(\phi),n) \in \mathbb{NM}^{I}_{R}$ and $chain(\neg \pi(\phi),n) \in \mathbb{NM}^{I}_{R}$. However, in the second case by Observation \ref{fused-grade} both $\interp{\pi(\phi)}^\mathcal{V}$ and $\interp{\pi(\neg \phi)}^\mathcal{V}$ have the same grade of $n$ in $E^n(\mathcal{T})$. Accordingly, both do not survive telescoping at level $n$. It must then be that $chain(\pi(\phi),n) \in \mathbb{NM}^{I}_{R}$ and $chain(\neg \pi(\phi),n) \not\in \mathbb{NM}^{I}_{R}$ for $\interp{\pi(\phi)}^\mathcal{V}$ to survive telescoping at level $n$ according to Observation \ref{survivors}. Hence, $\interp{\pi(\phi)}^\mathcal{V} \in ER_n$.
\end{enumerate}
Thus, $\mathcal{V}(\mathbb{M}_R)~\cup~\mathcal{V}(\mathbb{NM}^{I}_{R})~\cup~ EG \subseteq \mathcal{V}(\mathbb{M}_R)~\cup~ ER_n~ \cup~G_\mathcal{T}$. Since filters are monotonic,\\ then $F(\mathcal{V}(\mathbb{M}_R)~\cup~\mathcal{V}(\mathbb{NM}^{I}_{R})~\cup~ G) \subseteq F(\mathcal{V}(\mathbb{M}_R)~\cup~ ER_n~\cup~G_\mathcal{T})$. Hence, $\mathcal{F}^n(\mathfrak{T}) = F(\mathcal{V}(\mathbb{M}_R)~\cup~ ER_n~ \cup~G_\mathcal{T})$.
\end{proof}
Having stated the previous proposition, we now go back to Example \ref{translation-a} to show the sequence of graded consequences at successive levels.
\begin{example}
\normalfont{
We show the relevant graded consequences of $\mathbb{T}_R^I$ in Example \ref{translation-a} with respect to a series of canons, with $0 \leq n \leq 3$ with $\otimes=sum$ and $\oplus=max$.
\begin{figure}[h]
  \begin{center}
  \fbox{\parbox{3.5in}{
  \small{
  $$
  \begin{array}{l|ll}
    n=0 \hspace{0.2em} & 0.1. & \mathbb{T}_R^I \\
    & 0.2.  & bird(A) \\
    & 0.3.  & abnormal(bird(A))\\
    & & \\
    n=1  & 1.1. &Everything~at~n=0\\
    & 1.2. & true \supset \neg abnormal(penguin(A))\\
    & 1.3. & \mathbf{G}(true \supset \neg abnormal(bird(A)),1)\\
    & 1.4. & \mathbf{G}(true \supset \neg abnormal(penguin(A)),1)\\
    & 1.5.  & \mathbf{G}(\neg (true \supset \neg abnormal(penguin(A))),1)\\
    & 1.6.  & \mathbf{G}(\mathbf{G}(true \supset \neg abnormal(penguin(A)),1),1)\\
    & 1.7.  & \mathbf{G}(\mathbf{G}(true \supset \neg abnormal(bird(A)),1),1)\\
    & 1.8. &  \neg abnormal(penguin(A)) \\
    & 1.9. &  \neg flies(A) \\
    & & \\
    n=2  & 2.1. &  Everything~at~n=1~except~1.2,~1.8~and~1.9\\
    & 2.2. & \mathbf{G}(true \supset \neg abnormal(penguin(A)),1)\\
    & 2.3. & \mathbf{G}(true \supset \neg abnormal(bird(A)),1)\\
    & & \\
    n=3  & 3.1. &  Everything~at~n=2\\
     & 3.2. & true \supset \neg abnormal(penguin(A))\\
     & 3.3. &  \neg abnormal(penguin(A)) \\
    & 3.4. &  \neg flies(A) \\
  \end{array}
  $$
  }}}
  \end{center}
  \label{fig:ot}
\end{figure}

At $n=0$, we get 0.2 as it follows from $t2$ and $t3$ and 0.3 as it follows from $t2$ and $t6$ (the terms are shown in Example \ref{translation-a}). This level of graded consequences corresponds to the supported wffs in the first argument structure in Example \ref{example:argument-rules} (the one containing no non-monotonic rules). 

Upon telescoping to $n=1$, we get all the graded propositional terms embedded at level 1 in $\mathbb{NM}^{I}_R$ (1.2 to 1.7) by Observation \ref{survivors} since $F(\mathcal{V(\mathbb{M}_R)} \cup ER_1)$ is consistent. 
As a consequence, we get 1.8 as it follows from 1.2 and $t1$ and 1.9 as it follows from $t2$, $t5$, and 1.8. Therefore, at level 1 we end up believing that $A$ is not an abnormal penguin that does not fly. This level of graded consequences corresponds to the supported wffs in the second argument structure in Example \ref{example:argument-rules} (the one containing $r7$) since $\pi(r7)$ is a graded consequence of $\mathbb{T}_R^I$ at level $1$.

Going to level 2, both $true \supset \neg abnormal(penguin(A))$ and $\neg(true \supset \neg abnormal(penguin(A)))$ are extracted with a grade of $2$ (recall that embedded rules at level $n$ have a grade of $n$ according to Observation \ref{fused-grade}). Accordingly, both do not survive telescoping and 1.2 goes away depriving $1.8$ and $1.9$ of their support. The embedded graded propositional term $true \supset \neg abnormal(bird(A))$ in 1.3 does not survive telescoping as well as it contradicts with $0.3$. Since the interpretation of 0.3 is in the filter of the top theory, $true \supset \neg abnormal(bird(A))$ is kicked out. At level 2, we end up not believing that $A$ is an abnormal penguin nor do we believe that it flies. This level of graded consequences corresponds, just like the graded consequences at level 0, to the supported wffs in the first argument structure in Example \ref{example:argument-rules}.

Upon telescoping to level $3$, $true \supset \neg abnormal(penguin(A))$ comes back as it now has a higher grade (3) than $\neg (true \supset \neg abnormal(penguin(A)))$ (2). Accordingly, we get $\neg abnormal(penguin(A))$ and $\neg flies(A)$ back at level $3$. The embedded proposition at level $3$ $true \supset \neg abnormal(bird(A))$ still does not survive telescoping as it contradicts with 0.3. Thus, at level 3, we go back to believing that $A$ is not an abnormal penguin that does not fly. This level of graded consequences corresponds, just like the graded consequences at level 1, to the supported wffs in the first argument structure in Example \ref{example:argument-rules}.} \qed
\end{example}

As illustrated by this example, the set of graded consequences at any level $n$ correspond to the supported wffs in some argument structure. Whenever $F(\mathcal{V(\mathbb{M}_R)} \cup ER_n)$ is inconsistent, the inconsistency will be resolved by kicking out the rules with the least grade. As a result, the set of graded consequences at such level $n$ corresponds to the supported wffs in some argument structure that uses only the consistent rules. This is demonstrated at levels 2 and 3 of the previous example.

In the remaining of this section, we prove that using the proposed translation we can capture the notion of supported wffs in argument structures. We start by proving that the interpretation of the translation of any base fact or any wff supported by entirely monotonic rules in $R$ is a member of the graded filters of all degrees $n$.

\begin{lemma} \label{lemma1}
For all base facts $A \in T$ and all degrees $n$, it must be that $\interp{\pi(A)}^\mathcal{V} \in \mathcal{F}^n({\mathfrak{T}})$ and for every telescoping structure $\mathfrak{T}=\langle \mathcal{T}, \mathfrak{O}, \otimes, \oplus \rangle$.
\end{lemma}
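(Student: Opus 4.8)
The plan is to exploit the fact that base facts are translated into \emph{ungraded} members of the fixed top theory, so that they trivially survive every telescoping step and remain supported at every level, independently of any consistency considerations. First I would observe that, by clause 1 of Definition \ref{translation}, any base fact $A \in T$ satisfies $\pi(A) = A \in \mathbb{M}_R$, whence $\interp{\pi(A)}^\mathcal{V} \in \mathcal{V}(\mathbb{M}_R) \subseteq \mathcal{V}(\mathbb{T}^I_R) = \mathcal{T}$. I would also record that $\interp{\pi(A)}^\mathcal{V}$ is ungraded in every set arising during telescoping: by the construction of $\mathbb{NM}^I_R$, all grading propositions grade only (embeddings of) translations of non-monotonic rules, so no grading proposition of the form $\mathfrak{g}(\interp{\pi(A)}^\mathcal{V}, g)$ is ever present.

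Recalling from Definition \ref{def:gfilter} that $\mathcal{F}^n(\mathfrak{T}) = F(\tau^n_\mathfrak{T}(\mathcal{T}))$, the heart of the argument would be an induction on $n$ showing $\interp{\pi(A)}^\mathcal{V} \in \tau^n_\mathfrak{T}(\mathcal{T})$. The base case $n = 0$ is immediate, since $\tau^0_\mathfrak{T}(\mathcal{T}) = \mathcal{T}$ contains $\interp{\pi(A)}^\mathcal{V}$. For the inductive step, assuming $\interp{\pi(A)}^\mathcal{V} \in \tau^{n}_\mathfrak{T}(\mathcal{T})$, I would note $\interp{\pi(A)}^\mathcal{V} \in F(\tau^{n}_\mathfrak{T}(\mathcal{T}))$ and hence $\interp{\pi(A)}^\mathcal{V} \in E^1(F(\tau^{n}_\mathfrak{T}(\mathcal{T})))$. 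Since $\interp{\pi(A)}^\mathcal{V}$ is ungraded, clause 1 of Definition \ref{def:surviving} guarantees it survives every $\bot$-kernel it may belong to, so it is a kernel survivor. Since $\interp{\pi(A)}^\mathcal{V} \in \mathcal{T}$ gives $\interp{\pi(A)}^\mathcal{V} \in F(\mathcal{T})$, clause 1 of Definition \ref{def:support} makes it supported. By Definition \ref{def:telescope}, $\interp{\pi(A)}^\mathcal{V} \in \varsigma(\kappa(E^1(F(\tau^{n}_\mathfrak{T}(\mathcal{T}))), \mathfrak{T}), \mathcal{T}) = \tau^{n+1}_\mathfrak{T}(\mathcal{T})$, completing the induction. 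Then $\interp{\pi(A)}^\mathcal{V} \in \tau^n_\mathfrak{T}(\mathcal{T}) \subseteq F(\tau^n_\mathfrak{T}(\mathcal{T})) = \mathcal{F}^n(\mathfrak{T})$ for every $n$ and every telescoping structure $\mathfrak{T}$, as required.

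The step I expect to require the most care is verifying that base facts are genuinely ungraded throughout telescoping, i.e.\ that no grading proposition ever targets $\interp{\pi(A)}^\mathcal{V}$; this is precisely what lets clause 1 of Definition \ref{def:surviving} apply and decouples the argument from the consistency hypothesis. Everything else is a mechanical unwinding of the definitions of survival, support, and telescoping. Notably, I would avoid invoking Proposition \ref{prop:n-filter} directly, since its conclusion is conditioned on the consistency of $F(\mathcal{V}(\mathbb{M}_R) \cup ER_n)$, whereas the present lemma must hold unconditionally; the induction above sidesteps that hypothesis entirely because membership in the fixed top theory $\mathcal{T}$ is never jeopardized by any $\bot$-kernel involving an ungraded proposition.
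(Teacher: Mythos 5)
Your proposal is correct and takes essentially the same route as the paper: both rest on the fact that $\pi(A)$ is an ungraded member of the fixed top theory $\mathcal{T}$, so it survives every $\bot$-kernel (clause 1 of Definition \ref{def:surviving}) and is supported via $F(\mathcal{T})$ (clause 1 of Definition \ref{def:support}, as recorded in Observation \ref{supported-filter}) at every telescoping level. The paper compresses this into a two-sentence appeal to ``all members of the top theory survive telescoping and are supported at all levels,'' while your explicit induction on $n$ simply makes that appeal rigorous; your remark about deliberately avoiding the consistency-conditioned Proposition \ref{prop:n-filter} matches the paper's unconditional argument.
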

\begin{proof}
Any base fact $A \in R$ (and hence in $T$ by the definition of argument structures), $\pi(A)$ is ungraded in the monotonic subtheory $\mathbb{M}_R$ according to Definition \ref{translation}.  It follows from Observation \ref{supported-filter} and the definition of graded filters that $\interp{\pi(A)}^{\mathcal{V}} \in \mathcal{F}^n({\mathfrak{T}})$ for every telescoping structure $\mathfrak{T}=\langle \mathcal{T}, \mathfrak{O}, \otimes, \oplus \rangle$ since all members of the top theory $\mathcal{T}$ survive telescoping and are supported at all levels.
\end{proof}

\begin{lemma} \label{lemma2}
Let $\phi$ be the root of some argument $\phi \in T$ where all the arcs in $\phi$ are labelled by monotonic rules. It must be that $\interp{\pi(\phi)}^{\mathcal{V}} \in \mathcal{F}^n({\mathfrak{T}})$ for all degrees $n$ and for every telescoping structure $\mathfrak{T}=\langle \mathcal{T}, \mathfrak{O}, \otimes, \oplus \rangle$.  
\end{lemma}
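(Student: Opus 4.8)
The plan is to proceed by structural induction on the argument tree $\phi$, mirroring the recursive definition of arguments. In the base case $\phi$ is a single base fact $A$, and the claim $\interp{\pi(A)}^\mathcal{V} \in \mathcal{F}^n(\mathfrak{T})$ for all $n$ and all $\mathfrak{T}$ is exactly the content of Lemma \ref{lemma1}; so the base case requires no new work. For the inductive step I would take $\phi$ to have root $B$, formed from immediate subarguments $p_1,\ldots,p_m$ with roots $A_1,\ldots,A_m$ via a monotonic rule $A_1,\ldots,A_m \rightarrow B \in R$.

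First I would invoke the closure properties of argument structures (Definition \ref{argstructure}): since $T$ is closed, each $p_i \in T$, and since every arc of $\phi$ is labelled by a monotonic rule, so is every arc of each subtree $p_i$. Hence the induction hypothesis applies to each $p_i$ and yields $\interp{\pi(A_i)}^\mathcal{V} \in \mathcal{F}^n(\mathfrak{T})$, which we identify with $\interp{A_i}^\mathcal{V}$ (the encoding of the root wff). Next I would note that the rule itself translates to $\pi(A_1,\ldots,A_m \rightarrow B) = (\bigwedge_{i=1}^{m} A_i) \supset B \in \mathbb{M}_R \subseteq \mathcal{T}$, so its interpretation also lies in $\mathcal{F}^n(\mathfrak{T})$, because every member of the top theory survives telescoping and is supported at all levels (Observation \ref{supported-filter} together with Definition \ref{def:gfilter}); crucially, this holds for every $n$ and every $\mathfrak{T}$, since $\mathbb{M}_R$ is entirely ungraded and its preservation does not depend on the fusion operators $\otimes$ and $\oplus$.

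The crux is then a modus-ponens step carried out inside the Boolean algebra. Writing $a = \prod_{i=1}^{m} \interp{A_i}^\mathcal{V} = \interp{\bigwedge_{i=1}^{m} A_i}^\mathcal{V}$ and $b = \interp{\pi(B)}^\mathcal{V}$, closure of the filter $\mathcal{F}^n(\mathfrak{T})$ under meets gives $a \in \mathcal{F}^n(\mathfrak{T})$, while the translated rule contributes $-a + b \in \mathcal{F}^n(\mathfrak{T})$. Since $a \cdot (-a + b) = a \cdot b \le b$, closure under meets followed by upward closure (the two non-trivial filter conditions of Definition \ref{filter}) yields $b = \interp{\pi(\phi)}^\mathcal{V} \in \mathcal{F}^n(\mathfrak{T})$, which closes the induction.

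I expect the only delicate point to be this final algebraic step, namely checking that the abstract filter axioms genuinely implement modus ponens on material implications $-a + b$; everything else is bookkeeping about subtree closure and the observation that monotonic rules and base facts, being ungraded, sit in the top theory and therefore persist at every level and under every telescoping structure. An alternative framing would bypass the explicit induction by remarking that the base facts and monotonic-rule translations of $\phi$ classically entail $\phi$, and that $\mathcal{F}^n(\mathfrak{T})$, being a filter containing all of them, is closed under such entailment; but the structural induction stays closest to the recursive definition of arguments and makes the appeal to Lemma \ref{lemma1} transparent.
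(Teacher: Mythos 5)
Your proof is correct and takes essentially the same route as the paper's: induction on the structure (height) of the argument tree, with the base case delegated to Lemma \ref{lemma1} and the inductive step using closure of $T$ together with the fact that translated monotonic rules lie in the ungraded top theory $\mathbb{M}_R$ and hence persist in $\mathcal{F}^n(\mathfrak{T})$ for every $n$ and every $\mathfrak{T}$. The only difference is that you spell out the filter-algebraic modus ponens step ($a \cdot (-a+b) = a \cdot b \le b$, then closure under meets and upward closure) that the paper leaves implicit in its concluding ``it follows then'' sentence, which is a welcome tightening rather than a deviation.
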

\begin{proof}
We prove this by induction on the height of the argument tree. \\
\textbf{Base case:} The only argument trees containing single nodes in $T$ are the base facts. Hence, the base case follows from Lemma $\ref{lemma1}$.\\
\textbf{Induction hypothesis:} If $\phi$ is the root of some argument tree $\phi \in T$ of height at most $h$, then $\interp{\pi(\phi)}^{\mathcal{V}} \in \mathcal{F}^n({\mathfrak{T}})$ for all $n$ and for every telescoping structure $\mathfrak{T}=\langle \mathcal{T}, \mathfrak{O}, \otimes, \oplus \rangle$.\\
\textbf{Induction step:} Suppose that $\phi$ is the root of some argument tree $\phi \in T$ of height $h+1$. Then, there must be a monotonic rule $ r = A_1,...,A_m \rightarrow \phi \in R$ and, since $T$ is monotonically closed, $A_1,...,A_m$ are roots of argument trees of height at most $h$ in $T$. Hence, by the induction hypothesis, $\interp{\pi(A_i)}^{\mathcal{V}} \in \mathcal{F}^n({\mathfrak{T}})$  for $1 \le i \le m$, all $n$, and every telescoping structure $\mathfrak{T}=\langle \mathcal{T}, \mathfrak{O}, \otimes, \oplus \rangle$.  According to Definition \ref{translation}, $\pi(r) \in \mathbb{M}_{R}$ and, hence, $\interp{\pi(r)}^\mathcal{V} \in \mathcal{F}^n({\mathfrak{T}})$ for all $n$ by the definition of graded filters since all members of the top theory $\mathcal{T}$ survive telescoping and are supported at all levels. It follows then that $\interp{\pi(\phi)}^{\mathcal{V}} \in \mathcal{F}^n({\mathfrak{T}})$ as well. 
\end{proof}

It remains to prove that any wff supported by at least one non-monotonic rule is a member of a graded filter of some degree $n$. In order to prove this, we need the following proposition. 

\begin{proposition}\label{prop:rules}
Let $R(T) = R^T_M \cup \mathcal{S}$ according to Observation \ref{structure-rules}.
For all $\phi \in R(T)$, $\pi(\phi) \in \mathcal{F}^n(\mathfrak{T})$ for every telescoping structure $\mathfrak{T}=\langle \mathcal{T}, \mathfrak{O}, sum,max \rangle$ for all $n$ if $\mathcal{S}=\varnothing$ and for $n=I(\mathcal{S})$ otherwise.
\end{proposition}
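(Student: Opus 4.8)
The plan is to split the argument by cases on the structure of $R(T)$ as given by Observation \ref{structure-rules}, namely $R(T) = R^T_M \cup \mathcal{S}$, and to prove the claim separately for each $\phi \in R(T)$ depending on whether $\phi$ is a base fact, a monotonic rule, or a non-monotonic rule. The base facts and monotonic rules are handled immediately: by Definition \ref{translation} their translations live in the monotonic subtheory $\mathbb{M}_R \subseteq \mathcal{T}$, so by Lemmas \ref{lemma1} and \ref{lemma2} their interpretations belong to $\mathcal{F}^n(\mathfrak{T})$ for \emph{all} $n$ and every telescoping structure. This disposes of the $\mathcal{S} = \varnothing$ case entirely, since then $R(T) = R^T_M \subseteq R_M$ consists only of base facts and monotonic rules.

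For the case $\mathcal{S} \neq \varnothing$, the real work is to show that every non-monotonic rule $r \in \mathcal{S}$ has $\interp{\pi(r)}^\mathcal{V} \in \mathcal{F}^n(\mathfrak{T})$ precisely at $n = I(\mathcal{S})$. The strategy here is to invoke Proposition \ref{prop:n-filter}, which computes $\mathcal{F}^n(\mathfrak{T})$ explicitly as $F(\mathcal{V}(\mathbb{M}_R) \cup ER_n \cup G_\mathcal{T})$ \emph{provided} $F(\mathcal{V}(\mathbb{M}_R) \cup ER_n)$ is consistent. So I would first verify that this consistency hypothesis holds for $n = I(\mathcal{S})$, then show that $\interp{\pi(r)}^\mathcal{V} \in ER_n$ for each $r \in \mathcal{S}$, and conclude membership in the graded filter. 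To establish consistency, I would appeal to Observation \ref{structure-rules}, which guarantees that $\mathbb{M}_R \cup \mathcal{S}$ is consistent whenever $\mathcal{S}$ arises from an actual argument structure $T$; the point is that $ER_n$ at level $n = I(\mathcal{S})$ consists exactly of the interpretations of the rules in $\mathcal{S}$ (those $r$ with $chain(\pi(r), n) \in \mathbb{NM}^I_R$ but $chain(\neg\pi(r), n) \notin \mathbb{NM}^I_R$), which by the construction of $\mathbb{NM}^I_R$ in Definition \ref{translation} are precisely the members of $\mathcal{S}$ indexed by $I(\mathcal{S})$.

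The key technical bridge is matching the definition of $ER_n$ against the two conditions in the construction of $\mathbb{NM}^I_R$: for $r \in \mathcal{S}$, condition 1 places $chain(\pi(r), I(\mathcal{S}))$ in $\mathbb{NM}^I_R$ and, crucially, condition 2 (which would have added $chain(\neg\pi(r), I(\mathcal{S}))$) does \emph{not} apply since $r \in \mathcal{S}$; hence $\interp{\pi(r)}^\mathcal{V} \in ER_{I(\mathcal{S})}$. For $r \notin \mathcal{S}$, condition 2 embeds both $\pi(r)$ and $\neg\pi(r)$ at level $I(\mathcal{S})$, so by Observation \ref{fused-grade} they receive equal fused grade $I(\mathcal{S})$ and neither survives telescoping, keeping them out of $ER_{I(\mathcal{S})}$. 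Combining this with the explicit form of the filter from Proposition \ref{prop:n-filter} yields $\interp{\pi(r)}^\mathcal{V} \in \mathcal{F}^{I(\mathcal{S})}(\mathfrak{T})$.

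I expect the main obstacle to be establishing the consistency of $F(\mathcal{V}(\mathbb{M}_R) \cup ER_n)$ rigorously enough to unlock Proposition \ref{prop:n-filter}. This requires carefully linking the semantic consistency of $\mathbb{M}_R \cup \mathcal{S}$ (guaranteed by Observation \ref{structure-rules} because $T$ is a genuine argument structure and hence consistent in the sense of Definition \ref{argstructure}) to the consistency of the corresponding filter of interpretations, and confirming that $ER_{I(\mathcal{S})}$ contributes exactly the interpretations of the rules in $\mathcal{S}$ and nothing more. Once that identification is clean, the membership claim follows directly from the cited results; the remaining steps are largely bookkeeping against Definition \ref{translation} and Observations \ref{fused-grade} and \ref{survivors}.
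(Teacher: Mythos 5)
Your proposal follows essentially the same route as the paper's own proof: the same case split on $\mathcal{S}=\varnothing$ versus $\mathcal{S}\neq\varnothing$, membership of $\mathbb{M}_R$-translations in all graded filters via the top theory, and, for $r\in\mathcal{S}$, the identification $\interp{\pi(r)}^\mathcal{V}\in ER_{I(\mathcal{S})}$ from Definition \ref{translation} combined with the consistency guarantee of Observation \ref{structure-rules} to unlock Proposition \ref{prop:n-filter}. If anything, your treatment is slightly more careful than the paper's on the consistency hypothesis (you invoke $\mathbb{M}_R\cup\mathcal{S}$ consistent directly, where the paper passes loosely through consistency of $\pi(R(T))$), but the argument is the same.
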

\begin{proof}
Let $\phi \in R(T)$. If $\mathcal{S}=\varnothing$, then $R(T)=R^T_M$ is made up of only monotonic rules and $\pi(\phi) \in \mathbb{M}_R$ according to Definition \ref{translation}. By the definition of graded filters, $\interp{\pi(\phi)}^\mathcal{V}\in \mathcal{F}^n(\mathfrak{T})$ for all $n$. Otherwise, if $\mathcal{S}\not= \varnothing$, then according to Definition \ref{translation}, $chain(r,I(\mathcal{S})) \in  \mathbb{NM}^{I}_{R}$ and  $chain(\neg r,I(\mathcal{S})) \not \in  \mathbb{NM}^{I}_{R}$. Since $T$ is an argument structure, it must be that $R(T)=R^T_M\cup\mathcal{S}$ is consistent. Accordingly, $\pi(R(T))$ must be consistent as well. Hence, it follows from Proposition \ref{prop:n-filter} that $\interp{\pi(\phi)}^\mathcal{V} \in \mathcal{F}^n(\mathfrak{T})$ with $n=I(\mathcal{S})$. 
\end{proof}

\begin{lemma} \label{lemma3}
Let $\phi$ be the root of some argument $\phi \in T$ where, if $\phi$ has arcs, there is at least one arc in $\phi$ labelled by a non-monotonic rule.
Further, let $n = I(\mathcal{S})$ where $\mathcal{S} \in \wp(R_{N \hspace*{-1px} M})$ contains all the non-monotonic rules in $\phi$. It must be that $\interp{\pi(\phi)}^{\mathcal{V}} \in \mathcal{F}^n({\mathfrak{T}})$ for every telescoping structure $\mathfrak{T}=\langle \mathcal{T}, \mathfrak{O}, sum, max \rangle$.
\end{lemma}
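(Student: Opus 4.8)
The plan is to mirror the inductive strategy of Lemma \ref{lemma2}, but now working at the single fixed level $n = I(\mathcal{S})$ rather than "all levels", and to lean on Proposition \ref{prop:n-filter} to obtain an explicit description of $\mathcal{F}^n(\mathfrak{T})$. First I would pin down $ER_n$. By the construction of $\mathbb{NM}^{I}_{R}$ in Definition \ref{translation}, taking $\mathcal{S}$ to be the set of non-monotonic rules occurring in $\phi$ and $n = I(\mathcal{S})$, condition 1 embeds $chain(\pi(r),n)$ for every $r \in \mathcal{S}$, while condition 2 embeds both $chain(\pi(r),n)$ and $chain(\neg\pi(r),n)$ for every $r \in R_{N \hspace*{-1px} M}\setminus\mathcal{S}$. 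Hence $ER_n = \{\interp{\pi(r)}^\mathcal{V} \mid r \in \mathcal{S}\}$, i.e.\ exactly the positive translations of the non-monotonic rules used in $\phi$.

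Second, I would discharge the consistency hypothesis of Proposition \ref{prop:n-filter}. Since $\phi \in T$, the non-monotonic rules $\mathcal{S}$ of $\phi$ are a subset of the non-monotonic rules of the whole structure; writing $R(T) = R^T_M \cup \mathcal{S}_T$ as in Observation \ref{structure-rules}, we have $\mathcal{S} \subseteq \mathcal{S}_T$ together with $\mathbb{M}_R \cup \mathcal{S}_T$ consistent, so $\mathbb{M}_R \cup \mathcal{S}$ is consistent as a subset. Because $\pi$ translates rules faithfully to material implications, this gives that $F(\mathcal{V}(\mathbb{M}_R)\cup ER_n)$ is consistent, and Proposition \ref{prop:n-filter} then yields $\mathcal{F}^n(\mathfrak{T}) = F(\mathcal{V}(\mathbb{M}_R)\cup ER_n \cup G_\mathcal{T})$. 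In particular every rule translation used anywhere in $\phi$ lies in $\mathcal{F}^n(\mathfrak{T})$: base facts and monotonic rules because they belong to $\mathbb{M}_R$ and thus survive at every level (as in Lemmas \ref{lemma1} and \ref{lemma2}), and each $r \in \mathcal{S}$ because $\interp{\pi(r)}^\mathcal{V} \in ER_n \subseteq \mathcal{F}^n(\mathfrak{T})$ by Observation \ref{survivors}.

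Finally, I would run an induction on the height of the argument tree $\phi$, with the level $n$ held fixed throughout, to conclude $\interp{\pi(\phi)}^\mathcal{V} \in \mathcal{F}^n(\mathfrak{T})$. The base case is a leaf, i.e.\ a base fact, covered by Lemma \ref{lemma1}. For the step, the root $\phi$ is obtained from children $A_1,\dots,A_m$ by some rule $r$ (monotonic or non-monotonic) whose translation is $(\bigwedge_{i=1}^{m} A_i)\supset \phi$; the children are arguments in $T$ since $T$ is closed, so by the induction hypothesis each $\interp{\pi(A_i)}^\mathcal{V} \in \mathcal{F}^n(\mathfrak{T})$, and $\interp{\pi(r)}^\mathcal{V} \in \mathcal{F}^n(\mathfrak{T})$ by the previous paragraph. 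Closure of the filter under meets and logical consequence then gives $\interp{\pi(\phi)}^\mathcal{V} \in \mathcal{F}^n(\mathfrak{T})$ by modus ponens.

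The delicate point is the level bookkeeping. A proper subtree of $\phi$ may use a strictly smaller set of non-monotonic rules, so its translations would "naturally" surface at a lower index $I(\cdot)$; the observation that makes the fixed-level induction go through is that at $n = I(\mathcal{S})$ the set $ER_n$ already contains the translations of all of $\mathcal{S}$ simultaneously, so every non-monotonic rule occurring anywhere in $\phi$ is available in $\mathcal{F}^n(\mathfrak{T})$ at once. I expect the main obstacle to be verifying that this simultaneity does not violate the consistency precondition of Proposition \ref{prop:n-filter}, and this is precisely what the inclusion $\mathcal{S} \subseteq \mathcal{S}_T$ together with Observation \ref{structure-rules} secures.
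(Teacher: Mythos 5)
Your proof is correct and takes essentially the same route as the paper's: induction on the height of the argument tree, with base facts handled by Lemma \ref{lemma1} and the inductive step closing the graded filter under modus ponens once all rule translations occurring in $\phi$ are known to lie in $\mathcal{F}^n(\mathfrak{T})$. The only difference is packaging: the paper gets that last fact by citing Proposition \ref{prop:rules}, whereas you inline its content (identifying $ER_n$ with the translations of $\mathcal{S}$, discharging the consistency hypothesis via Observation \ref{structure-rules}, and invoking Proposition \ref{prop:n-filter} directly) --- and your version is in fact slightly tighter on the level bookkeeping, since Proposition \ref{prop:rules} is stated for the non-monotonic rules of all of $T$ while the lemma's $n$ is indexed by the non-monotonic rules of $\phi$ alone.
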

\begin{proof}
We prove this by induction on the height of the argument tree.\\
\textbf{Base case:} The only argument trees containing single nodes in $T$ are the base facts. Hence, the base case follows from Lemma $\ref{lemma1}$.\\
\textbf{Induction hypothesis:} If $\phi$ is the root of some argument tree $\phi \in T$ of height at most $h$, then $\interp{\pi(\phi)}^{\mathcal{V}} \in \mathcal{F}^n({\mathfrak{T}})$ for some degree $n = I(\mathcal{S})$ and for every telescoping structure $\mathfrak{T}=\langle \mathcal{T}, \mathfrak{O}, sum, max \rangle$.\\
\textbf{Induction step:} Suppose that $\phi$ is the root of some argument tree $\phi \in T$ of height $h+1$ with direct children $A_1,...,A_m$. Since $T$ is closed, $A_1,...,A_m$ are roots of argument trees in $T$ of height at most $h$. Hence, by the induction hypothesis, $\interp{\pi(A_i)}^{\mathcal{V}} \in \mathcal{F}^n({\mathfrak{T}})$ for $1\le i \le m$, $n = I(\mathcal{S})$, and every telescoping structure $\mathfrak{T}=\langle \mathcal{T}, \mathfrak{O}, sum, max \rangle$. By Proposition \ref{prop:rules}, $\pi(R(T)) \subset \mathcal{F}^n(\mathfrak{T})$ including the rules labelling the arcs from $A_1,...A_m$ to $\phi$. It follows then that $\interp{\pi(\phi)}^{\mathcal{V}} \in \mathcal{F}^n(\mathfrak{T})$ for every telescoping structure $\mathfrak{T}=\langle \mathcal{T}, \mathfrak{O},\\ sum, max \rangle$.
\end{proof}

We now use Lemmas \ref{lemma1}, \ref{lemma2}, and \ref{lemma3} to relate the notions of graded consequence in $\G$ and supported wffs in argument structures.

\begin{theorem} \label{theorem1}
For any argument structure $T$ and $n=I(\mathcal{S})$ where $\mathcal{S} \in \wp(R_{N \hspace*{-1px} M})$ contains all the non-monotonic rules appearing as arc labels in $T$, if $\phi \in Wff(T)$, then $\mathbb{T}^{I}_{R} \gcon^\mathcal{C} \pi(\phi)$ for some grading canon $\mathcal{C}=\langle \otimes, \oplus,n\rangle$ where $\otimes=sum$ and $\oplus=max$.
\end{theorem}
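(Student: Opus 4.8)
The plan is to reduce the theorem to a direct case analysis over the argument $p \in T$ witnessing $\phi \in Wff(T)$, dispatching each case to one of the three preparatory lemmas. First I would unpack the target: to establish $\mathbb{T}^{I}_{R} \gcon^\mathcal{C} \pi(\phi)$ for the canon $\mathcal{C}=\langle sum, max, n\rangle$ with $n=I(\mathcal{S})$, it suffices by Definition \ref{def:cons} to show $\interp{\pi(\phi)}^\mathcal{V} \in \mathcal{F}^n(\mathfrak{T})$ for every telescoping structure $\mathfrak{T}=\langle \mathcal{T}, \mathfrak{O}, sum, max\rangle$ whose $\mathfrak{O}$ extends $F(\mathcal{T}\cap Range(<))$. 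Since Lemmas \ref{lemma1}, \ref{lemma2}, and \ref{lemma3} are each stated uniformly over all telescoping structures with $\otimes=sum$ and $\oplus=max$, their membership conclusions line up with this requirement automatically once the right lemma is selected.

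Because $\phi \in Wff(T)$, there is an argument $p \in T$ whose root is $\phi$, and I would split on the arc labels of $p$. If $p$ is a single node, then $\phi$ is a base fact and Lemma \ref{lemma1} gives $\interp{\pi(\phi)}^\mathcal{V}\in\mathcal{F}^n(\mathfrak{T})$ for all $n$, in particular for $n=I(\mathcal{S})$. If every arc of $p$ is labelled by a monotonic rule, Lemma \ref{lemma2} likewise places $\interp{\pi(\phi)}^\mathcal{V}$ in $\mathcal{F}^n(\mathfrak{T})$ for all $n$; in both cases the membership holds at every level and hence \emph{a fortiori} at the target level. The remaining case is when at least one arc of $p$ carries a non-monotonic rule, and here Lemma \ref{lemma3} applies and pins the membership to the level determined by the non-monotonic rules feeding $\phi$.

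The delicate point, and the step I expect to require the most care, is reconciling the level index. The theorem fixes $n=I(\mathcal{S})$ with $\mathcal{S}$ collecting \emph{all} non-monotonic rules appearing anywhere in $T$, whereas the argument $p$ for a given $\phi$ may use only a subset of these. I would argue that this is harmless: by Observation \ref{structure-rules}, $R(T)=R^T_M\cup\mathcal{S}$ with $\mathbb{M}_R\cup\mathcal{S}$ consistent, so $\mathcal{S}$ is precisely the index set that Proposition \ref{prop:rules} uses to place all of $\pi(R(T))$---every rule occurring in $T$, not merely those in $p$---inside $\mathcal{F}^{I(\mathcal{S})}(\mathfrak{T})$. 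Consequently the non-monotonic rules labelling the arcs of $p$ are all available at level $I(\mathcal{S})$, and the induction on tree height underlying Lemma \ref{lemma3} closes at exactly that level. This is also where the standing consistency of $R(T)$, inherited from $T$ being an argument structure, is indispensable, since it is what licenses the appeal to Proposition \ref{prop:n-filter} through Proposition \ref{prop:rules}.

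Finally, having secured $\interp{\pi(\phi)}^\mathcal{V}\in\mathcal{F}^n(\mathfrak{T})$ in every case for the chosen $n$ and for an arbitrary admissible $\mathfrak{T}$, I would read off $\mathbb{T}^{I}_{R}\gcon^\mathcal{C}\pi(\phi)$ directly from Definition \ref{def:cons}. The genuine mathematical content resides in the three lemmas and in Proposition \ref{prop:rules}; the theorem itself is essentially a bookkeeping argument that routes $\phi$ to the appropriate lemma according to the rule types in its supporting argument and verifies that the level indices coincide.
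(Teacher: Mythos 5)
Your proposal is correct and takes essentially the same route as the paper: a case split on the argument witnessing $\phi \in Wff(T)$ (base fact, all arcs monotonic, at least one non-monotonic arc), dispatched to Lemmas \ref{lemma1}, \ref{lemma2}, and \ref{lemma3} respectively, and closed off by the definition of graded consequence. Your extra paragraph reconciling the theorem's level $n = I(\mathcal{S})$ (indexed by \emph{all} non-monotonic rules in $T$) with the possibly smaller set of rules used in the particular witnessing argument is a subtlety the paper's proof passes over silently, and your resolution via Observation \ref{structure-rules} and Proposition \ref{prop:rules} matches how the proof of Lemma \ref{lemma3} actually operates.
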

\begin{proof}
Suppose that $\phi \in Wff(T)$, then by the definition of $Wff(T)$ it must be one of the following three cases.
\begin{enumerate}\itemsep0pt
    \item $\phi$ is a base fact. By lemma 1, $\interp{\pi(\phi)}^{\mathcal{V}} \in \mathcal{F}^n({\mathfrak{T}})$ for all $n$ and every telescoping structure $\mathfrak{T}=\langle \mathcal{T}, \mathfrak{O}, \otimes, \oplus \rangle$. Hence, by Definition \ref{def:gfilter}, $\mathbb{T}^{I}_{R} \gcon^\mathcal{C} \pi(\phi)$ for some grading canon $\mathcal{C}=\langle sum, max,n\rangle$ where $n=I(\mathcal{S})$. 
    \item $\phi$ is a root of some argument with all arcs labelled by monotonic rules. By lemma 2, $\interp{\pi(\phi)}^{\mathcal{V}} \in \mathcal{F}^n({\mathfrak{T}})$ for all $n$ and every telescoping structure $\mathfrak{T}=\langle \mathcal{T}, \mathfrak{O}, \otimes, \oplus \rangle$. Hence, by Definition \ref{def:gfilter}, $\mathbb{T}^{I}_{R} \gcon^\mathcal{C} \pi(\phi)$ for some grading canon $\mathcal{C}=\langle sum, max,n\rangle$ where $n=I(\mathcal{S})$.
    \item $\phi$ is a root of some argument with all arcs labelled by non-monotonic rules. By lemma 3, $\interp{\pi(\phi)}^{\mathcal{V}} \in \mathcal{F}^n({\mathfrak{T}})$ for $n=I(\mathcal{S})$ and every telescoping structure $\mathfrak{T}=\langle \mathcal{T}, \mathfrak{O}, sum, max, \rangle$. Hence, by Definition \ref{def:gfilter}, $\mathbb{T}^{I}_{R} \gcon^\mathcal{C} \pi(\phi)$ for some grading canon $\mathcal{C}=\langle sum, max,n\rangle$.
\end{enumerate}
\end{proof}

Since the results in \cite{lin1989argument} showing that default logic, autoepistemic logic, circumscription, and the principle of negation as failure are all special cases of argument systems utilize the notion of completeness with respect to a wff $\phi$, we need to relate the same notion to our notion of graded consequence. The following corollary does exactly that and follows directly from Theorem \ref{theorem1}.

\begin{cor}
If $T$ is complete with respect to $\phi$, then $\mathbb{T}^{I}_{R} \gcon^\mathcal{C} \phi$ or $\mathbb{T}_{R} \gcon^\mathcal{C} \neg \phi$ for some grading canon $\mathcal{C}=\langle \otimes, \oplus,n\rangle$ where $\otimes=sum$ and $\oplus=max$.
\end{cor}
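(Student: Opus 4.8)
The plan is to unfold the definition of completeness and reduce the claim to a single application of Theorem \ref{theorem1} in each of two cases. First I would recall that, by definition, $T$ being complete with respect to $\phi$ means that $T$ contains an argument for $\phi$ or an argument for $\neg\phi$; equivalently, by the definition of $Wff(T)$, that $\phi \in Wff(T)$ or $\neg\phi \in Wff(T)$. This is the only unpacking step, and it immediately splits the argument into the two disjuncts appearing in the statement.

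Next I would fix the canon that Theorem \ref{theorem1} forces on us. Let $\mathcal{S} \in \wp(R_{N \hspace*{-1px} M})$ be the set of all non-monotonic rules appearing as arc labels in $T$; this is well defined by Observation \ref{structure-rules}, and crucially it depends only on $T$, not on whether we are establishing $\phi$ or $\neg\phi$. Setting $n = I(\mathcal{S})$ and $\mathcal{C} = \langle sum, max, n\rangle$, a single canon serves both cases. I would then apply Theorem \ref{theorem1}: in the case $\phi \in Wff(T)$ it yields $\mathbb{T}^{I}_{R} \gcon^\mathcal{C} \pi(\phi)$, and in the case $\neg\phi \in Wff(T)$ it yields $\mathbb{T}^{I}_{R} \gcon^\mathcal{C} \pi(\neg\phi)$. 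Since the object sitting at the root of an argument is a wff rather than a rule, $\pi$ acts as the identity here ($\pi(\phi) = \phi$ and $\pi(\neg\phi) = \neg\phi$), so the conclusion reads $\mathbb{T}^{I}_{R} \gcon^\mathcal{C} \phi$ or $\mathbb{T}^{I}_{R} \gcon^\mathcal{C} \neg\phi$, exactly as required (reading the $\mathbb{T}_{R}$ of the second disjunct as the same $I$-translation $\mathbb{T}^{I}_{R}$).

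The result is essentially immediate given Theorem \ref{theorem1}, so there is no genuine obstacle; the only points that need care are bookkeeping ones. The first is verifying that one canon suffices for both disjuncts, which holds because both candidate arguments live in the single structure $T$ and therefore induce the same $\mathcal{S}$ and the same $n = I(\mathcal{S})$. The second is the identification of $\pi(\phi)$ with $\phi$ at the level of the supported wff, which is legitimate because $\pi$ leaves base facts, and more generally the wff at the root of an argument, unchanged. No induction or filter computation is needed beyond what Theorem \ref{theorem1} already supplies.
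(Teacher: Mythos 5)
Your proposal is correct and matches the paper's approach exactly: the paper gives no separate argument for this corollary, stating only that it ``follows directly from Theorem \ref{theorem1}'', and your proof spells out precisely that derivation --- unfolding completeness into $\phi \in Wff(T)$ or $\neg\phi \in Wff(T)$ and applying Theorem \ref{theorem1} with the single canon $\mathcal{C}=\langle sum, max, I(\mathcal{S})\rangle$ determined by $T$. Your bookkeeping remarks (that one canon serves both disjuncts, that $\pi$ is the identity on supported wffs, and that $\mathbb{T}_{R}$ in the second disjunct should read $\mathbb{T}^{I}_{R}$) are all consistent with the paper's intent.
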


It turns out, however, that if $\mathbb{T}^{I}_{R} \gcon^\mathcal{C} \pi(\phi)$ with $\mathcal{C}=\langle sum, max,n\rangle$ and $n=I(\mathcal{S})$, it is not necessarily the case that $\phi \in Wff(T)$. This is mainly because, given our construction in Definition \ref{translation}, the rules in $R(T)$ are graded consequences of $\mathbb{T}^{I}_{R}$ at level $n$ but are not in $Wff(T)$ as the rules appearing as arc labels in $T$ are never in $Wff(T)$. Further, our construction maps all monotonic rules in $R$ including the monotonic rules not in $R(T)$ to $\G$ propositional terms in $\mathbb{M}_R$. Such monotonic rules will also be graded consequences of $\mathbb{T}^{I}_{R}$ at level $n$ but are not in $Wff(T)$. For this reason, Theorem \ref{theorem2} presents a more general result. We prove that if $\mathbb{T}^{I}_{R} \gcon^\mathcal{C} \pi(\phi)$ then $\pi(\phi)$ must be a logical consequence of $\pi(R(T) \cup R_M')$ where $R_M'$ is a maximal subset of $R_M$ consistent with $R(T)$. It is important to note here that $\pi(Wff(T))$ is a subset of the set of logical consequences of  $\pi(R(T) \cup R_M')$.

\begin{theorem}\label{theorem2}
For any argument structure $T$ and $n=I(\mathcal{S})$ where $\mathcal{S} \in \wp(R_{N \hspace*{-1px} M})$ contains all the non-monotonic rules appearing as arc labels in $T$, if $\mathbb{T}^{I}_{R} \gcon^\mathcal{C} \pi(\phi)$ for a grading canon $\mathcal{C}=\langle sum, max,n\rangle$, then $\pi(\phi)$ is a logical consequence of $\pi(R(T) \cup R_M')$.
\end{theorem}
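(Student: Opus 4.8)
The plan is to unfold the definition of graded consequence and reduce the claim to an exact computation of the level-$n$ graded filter. By Definition~\ref{def:cons}, the hypothesis $\mathbb{T}^{I}_{R} \gcon^\mathcal{C} \pi(\phi)$ with $\mathcal{C} = \langle sum, max, n\rangle$ and $n = I(\mathcal{S})$ says that $\interp{\pi(\phi)}^\mathcal{V} \in \mathcal{F}^n(\mathfrak{T})$ for every admissible valuation $\mathcal{V}$ and every telescoping structure $\mathfrak{T} = \langle \mathcal{T}, \mathfrak{O}, sum, max\rangle$. By Definition~\ref{def:con2}, the goal is to show $\interp{\pi(\phi)}^\mathcal{V} \in F(\interp{\pi(R(T) \cup R_M')}^\mathcal{V})$ for every valuation. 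So the whole argument reduces to pinning down $\mathcal{F}^n(\mathfrak{T})$ precisely and comparing its generators with those of $\pi(R(T) \cup R_M')$.

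First I would justify applying Proposition~\ref{prop:n-filter}. Reading $ER_n$ off the construction of $\mathbb{NM}^{I}_{R}$ in Definition~\ref{translation} shows that $ER_n = \{\interp{\pi(r)}^\mathcal{V} \mid r \in \mathcal{S}\}$: each $r \in \mathcal{S}$ is embedded at depth $n$ without its negation, whereas each $r \notin \mathcal{S}$ is embedded together with $\neg\pi(r)$ and so is barred from $ER_n$. By Observation~\ref{structure-rules}, since $T$ is an argument structure we have $R(T) = R^T_M \cup \mathcal{S}$ with $\mathbb{M}_R \cup \mathcal{S}$ consistent, which is exactly the hypothesis that $F(\mathcal{V}(\mathbb{M}_R) \cup ER_n)$ is consistent required by Proposition~\ref{prop:n-filter}. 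As all of $R_M$ is then consistent with $R(T)$, the maximal consistent subset is $R_M' = R_M$; I would carry the general $R_M'$ only to cover the degenerate case. Proposition~\ref{prop:n-filter} then yields $\mathcal{F}^n(\mathfrak{T}) = F(\mathcal{V}(\mathbb{M}_R) \cup ER_n \cup G_\mathcal{T})$.

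Next I would match generators and discharge the grading propositions. Since every base fact lies in every argument structure (clause~1 of Definition~\ref{argstructure}), the base facts of $T$ are exactly those of $R$, so $\mathbb{M}_R$ is the union of $\pi$ applied to the base facts and of $\pi(R_M)$, and $R(T) \cup R_M' = R(T) \cup R_M$ gives $\interp{\pi(R(T) \cup R_M')}^\mathcal{V}$ and $\mathcal{V}(\mathbb{M}_R) \cup ER_n$ the same non-grading generators. Hence $\interp{\pi(\phi)}^\mathcal{V}$ lies in the filter generated by $\interp{\pi(R(T) \cup R_M')}^\mathcal{V}$ together with the grading propositions $G_\mathcal{T}$. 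It then remains to remove $G_\mathcal{T}$: because $\pi$ emits only non-grading terms, both $\pi(\phi)$ and every generator of $\pi(R(T) \cup R_M')$ live in the non-grading fragment, while $G_\mathcal{T}$ consists solely of propositions $\mathfrak{g}(\cdot,\cdot)$. Exploiting that $\mathfrak{g}$ is unconstrained, I would restrict attention to structures in which the grading propositions are realised as Boolean generators independent of the non-grading fragment; in any such structure, membership of the non-grading element $\interp{\pi(\phi)}^\mathcal{V}$ in $F(X \cup G_\mathcal{T})$ forces $\interp{\pi(\phi)}^\mathcal{V} \in F(X)$ for $X = \interp{\pi(R(T) \cup R_M')}^\mathcal{V}$. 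Since classical entailment between non-grading formulas is determined by the non-grading part of a structure and these structures realise all relevant valuations, we conclude $\pi(R(T) \cup R_M') \models \pi(\phi)$.

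I expect the elimination of $G_\mathcal{T}$ to be the main obstacle, since it needs a clean argument that no grading proposition is ever required to derive a non-grading conclusion, either through the independence-of-$\mathfrak{g}$ route above or through an algebraic lemma isolating the non-grading subalgebra and showing that the filter generated by $X \cup G_\mathcal{T}$ and the filter generated by $X$ agree on non-grading propositions. A secondary subtlety is the consistency bookkeeping behind $R_M'$: one must confirm that Observation~\ref{structure-rules} delivers \emph{classical} consistency of $\mathbb{M}_R \cup \mathcal{S}$ (so that $R_M' = R_M$ and Proposition~\ref{prop:n-filter} applies), and check that the $R_M'$ formulation correctly absorbs those monotonic rules of $R$ that are unused in $T$ yet still arise as graded consequences at level $n$.
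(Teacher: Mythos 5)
Your proposal follows the same skeleton as the paper's proof: unfold $\gcon^\mathcal{C}$, invoke Proposition~\ref{prop:n-filter} to pin down $\mathcal{F}^n(\mathfrak{T})=F(\mathcal{V}(\mathbb{M}_R)\cup ER_n\cup G_\mathcal{T})$, identify $ER_n$ with $\{\interp{\pi(r)}^\mathcal{V}~|~r\in\mathcal{S}\}$, and observe that every non-grading generator is the interpretation of a member of $\pi(R(T)\cup R_M')$; the paper packages this as a four-case analysis (generator in $\mathcal{V}(\mathbb{M}_R)$, in $ER_n$, in $G_\mathcal{T}$, or in the filter closure), which is the same content as your generator matching, and your use of Observation~\ref{structure-rules} to secure the consistency hypothesis (with $R_M'=R_M$) is consistent with how the paper uses it in Proposition~\ref{prop:rules}. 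Where you genuinely depart from --- and improve on --- the paper is the step you single out as the main obstacle. The paper's final case disposes of elements that are in the filter but not among the generators by appealing to ``the monotonicity of filters,'' but monotonicity runs in the wrong direction here: it gives $F(X)\subseteq F(X\cup G_\mathcal{T})$, whereas what is needed is that a \emph{non-grading} element of $F(X\cup G_\mathcal{T})$ already lies in $F(X)$, i.e.\ that the grading propositions $G_\mathcal{T}$ (which have no counterpart in $\pi(R(T)\cup R_M')$) contribute nothing to non-grading consequences. Your independence-of-$\mathfrak{g}$ argument --- realise the grading propositions as generators free over the non-grading subalgebra, conclude membership in $F(X)$ there, and transfer back to arbitrary valuations since non-grading terms are interpreted identically --- is exactly the missing lemma, and carrying it out (including the check that such valuations remain admissible, e.g.\ depth- and fan-out-bounded) would make the proof more rigorous than the one printed in the paper. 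So: same decomposition, but your version closes a gap the paper's own proof papers over.
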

\begin{proof}
Suppose that $\mathbb{T}^{I}_{R} \gcon^\mathcal{C} \pi(\phi)$ with $\mathcal{C}=\langle sum, max,n\rangle$. Then, $\interp{\pi(\phi)}^\mathcal{V} \in \mathcal{F}^n(\mathfrak{T})$. According to Proposition \ref{prop:n-filter}, $\mathcal{F}^n(\mathfrak{T}) = F(\mathcal{V}(\mathbb{M}_R)~\cup~ ER_n~\cup~G_\mathcal{T})$. We have four cases.
\begin{enumerate}\itemsep0pt
    \item $\interp{\pi(\phi)}^\mathcal{V} \in  \mathcal{V}(\mathbb{M}_R)$. In this case, $\phi$ is either a base fact or a monotonic rule in $R$. If $\phi$ is a base fact, then it must be in $Wff(T)$ by the definition of argument structures (and hence $\pi(\phi)$ a logical consequence of $R(T)$). If $\phi$ is a monotonic rule appearing as an arc label in $T$, then $\phi \in R(T)$. Otherwise, if $\phi$ is a monotonic rule that does not appear as an arc label in $T$, then $\phi \in R_M'$. In the three cases, $\pi(\phi)$ is a logical consequence of $\pi(R(T) \cup R_M')$.
    \item $\interp{\pi(\phi)}^\mathcal{V}  \in ER_n$. In this case, $\pi(\phi)$ must be a non-monotonic rule embedded at level $n$ whose negation is not embedded at level $n$. Hence, $\phi \in \mathcal{S}$ and $\phi \in R(T)$ by the definition of $R(T)$. Hence, $\pi(\phi) \in \pi(R(T))$.
    \item $\interp{\pi(\phi)}^\mathcal{V} \in G_\mathcal{T}$. This can not be as $\phi$ must be a grading term and grading terms are never in $R$.
    \item $\interp{\phi}^\mathcal{V} \in  F(\mathcal{V}(\mathbb{M}_R)~\cup~ ER_n~\cup~G_\mathcal{T})$. It follows from the previous three cases and the monotonicity of filters that $\pi(\phi)$ is a logical consequence of $\pi(R(T) \cup R_M')$.
\end{enumerate}
\end{proof}
\section{Conclusion} \label{conc}
The purpose of this paper was to present an algebraic graded non-monotonic logic we refer to as $\G$ capable of encompassing a wide variety of non-monotonic logical formalisms. We showed how argument systems can be encoded in $\G$, and proved that the $\G$ logical consequence relation captures the notion of supported propositions in argument structures. Since default logic, autoepistemic logic, circumscription, the principle of negation as failure are all proved to be special cases of argument systems, our results show that $\G$ captures such the previously-mentioned non-monotonic logics as well. Previous results show that $\G$ subsumes possibilistic logic and any non-monotonic inference relation satisfying Makinson's rationality postulates. This proves the universality of $\G$ as a non-monotonic logic. To the best of our knowledge, $\G$ is the only framework in the literature that was shown to capture weighted approaches to non-monotonicity such as possibilistic logic in addition to the classical previously- mentioned approaches. In this way, $\G$ provides a powerful unified framework for non-monotonicity. 

\bibliographystyle{eptcs}
\bibliography{generic}
\end{document}